\DeclareMathOperator*{\argmax}{arg\,max}
\title{\LARGE \bf
On Connections between Opacity and Security in Linear Systems
}
\author{Varkey M. John and Vaibhav Katewa% <-this % stops a space
\thanks{V. M. John is with the Department of ECE at the Indian Institute of Science (IISc) Bangalore. He is supported
by a fellowship grant from the Cisco Centre for Networked Intelligence  at IISc. V. Katewa is with the Robert Bosch Center for CPS and Department
of ECE at IISc Bangalore. Email IDs:
        {\tt\small \{varkeym@iisc.ac.in,
vkatewa@iisc.ac.in\}}}}% <-this % stops a space
\newtheorem{theorem}{Theorem}
\newtheorem{corollary}{Corollary}
\newtheorem{lemma}{Lemma}
\theoremstyle{definition}
\newtheorem{definition}{Definition}
\newtheorem{example}{Example}
\theoremstyle{remark}
\newtheorem{remark}{Remark}
\theoremstyle{definition}
\newcommand\numberthis{\addtocounter{equation}{1}\tag{\theequation}}
\begin{document}
\maketitle
\thispagestyle{empty}
\pagestyle{empty}

%%%%%%%%%%%%%%%%%%%%%%%%%%%%%%%%%%%%%%%%%%%%%%%%%%%%%%%%%%%%%%%%%%%%%%%%%%%%%%%%
\begin{abstract}
Opacity and attack detectability are important properties for any system as they allow the states to remain private and malicious attacks to be detected, respectively. In this paper, we show that a fundamental trade-off exists between these properties for a linear dynamical system, in the sense that if an opaque system is subjected to attacks, all attacks cannot be detected. We first characterize the opacity conditions for the system in terms of its weakly unobservable subspace (WUS) and show that the number of opaque states is proportional to the size of the WUS.  Further, we establish conditions under which increasing the opaque sets also increases the set of undetectable attacks. This highlights a fundamental trade-off between security and privacy. We demonstrate application of our results on a remotely controlled automotive system.
\end{abstract}
%%%%%%%%%%%%%%%%%%%%%%%%%%%%%%%%%%%%%%%%%%%%%%%%%%%%%%%%%%%%%%%%%%%%%%%%%%%%%%%%
\section{Introduction}
Cyber-physical attacks have become significantly prevalent in recent years, including the Stuxnet attack (2010) and the Maroochy Shire attack (2000) \cite{Stuxnet_IEEESpectrum_2013}, \cite{MaroochyShire_MIT_2017}. Due to such vulnerabilities, there has been a larger thrust in the last decade to enable these systems to detect attacks upfront \cite{AttackDetection_IEEETAC_2013,DynamicDetection_IEEETAC_2017}. Such detection mechanisms are especially relevant since traditional cyber-security solutions cannot be used to detect real-time physics-based attacks.\par
In parallel, increased demand for privacy has led to a focus on keeping information from Cyber-Physical Systems (CPS) confidential. In particular, the notion of opacity, which was first considered in the computer science literature (with discrete events) \cite{UnificationOpacity_ITS_2004,OpacityPetriNets_ENTCS_2005}, has been applied to CPS and dynamical systems (with continuous state space) in recent years \cite{OpacityLinearSystems_IEEETAC_2020,ApproxOpacity_IEEETAC_2021,OpacityRobControl_IEEETAC_2019,OpacityDSE_Automatica_2022}. Informally, opacity requires that same outputs should be produced by a secret as well as a non-secret initial states. This prevents an eavesdropper to distinguish whether the system was initialized in a secret or non-secret state based on the outputs. In \cite{OpacityLinearSystems_IEEETAC_2020}, the authors developed the notion of opacity for linear dynamical systems and showed its relation to other system properties like output controllability. A relaxed notion of ``approximate opacity" was developed in \cite{ApproxOpacity_IEEETAC_2021}, where the outputs from secret and non-secret states were allowed to be ``close" to each other. Algorithms to enforce opacity for robust control and distributed state estimation in linear CPS are proposed in \cite{OpacityRobControl_IEEETAC_2019} and \cite{OpacityDSE_Automatica_2022}, respectively.\par
While research on security and privacy have produced a large spectrum of results individually, studies that assess the impact of security on privacy, and vice-versa, are fairly limited. In \cite{IntegrityDP_ACC_2017}, the authors discuss how differential privacy mechanism can weaken system's security against integrity attacks. The trade-off between local mechanisms of security and privacy in interconnected dynamical systems is analyzed in \cite{SecurityPrivacyInterconn_Automatica_2021}. A game-theoretic approach to the security-privacy trade-off using Quantitative Information Flow theory is investigated in \cite{QIF_IEEETIFAS_2019}. This paper aims to explore and elucidate the fundamental connections between opacity and attack detectability. The main contributions of the paper are: 

\noindent 1. We characterize the relation between opacity and the Weakly Unobservable Subspace (WUS), and use this to derive conditions for opacity of initial states. 

\noindent 2. We show that there exists a trade-off between opacity and attack detectability, and one cannot have an opaque system without making it vulnerable to undetectable attacks. Further, we show that increasing the opaque set also increases the set of undetectable attacks under certain conditions.

%In the first part of our paper, we build up on current works on opacity and we characterize opacity for linear systems from foundational principles. From our study, we discovered that a fundamental connection exists between opacity of a system and its weakly unobservable subspace. Using this connection, opacity was formulated for initial states and sets of initial states. Also, the change in opacity was assessed when different parameters of the system was changed.\par
%In the second part of our work, we seek to supplement the nascent research area of trade-off between security and privacy. In particular, we investigate this in terms of the system's opacity and attack detectability. As far as we know, such an analysis has not been looked into previously. The results of our analysis prove that an opaque (or private) system is inevitably vulnerable to undetectable attacks (thus less secure).\par
The results are discussed in a running example and we illustrate their practical application on a remotely controlled automotive system.
%In the following, we describe the main contributions of our work.\par
%\noindent\emph{Main Contributions:}\par
%\noindent 1. Characterization of the opacity of a system in terms of its weakly unobservable subspace.\par
%\noindent 2. Analysis of the trade-off between opacity and attack detectability in terms of:\par
%a. Existence of opaque sets and undetectable attacks.\par
%b. Change of opaque sets and undetectable attacks with change in initial state set.\\~\par

\noindent {{\bf Notation:}} $\mathcal{R}(A)$ and $\mathcal{N}(A)$ denote the range space and the null space of matrix $A$, respectively. The orthogonal complement of a vector space $V$ is defined as $V^{\perp}=\{w \in \mathbb{R}^n:w\cdot v=0\:\forall \: v \in V\}$, where $w.v$ represents the dot product between $w$ and $v$. For matrix $A$ and set $\mathcal{S}$, $A\mathcal{S}=\{As:s\in\mathcal{S}\}$. Similarly, for matrix $A$ and vector space $V$, $AV=\{Av:v\in V\}$. $\mathcal{S}_1\bigoplus\mathcal{S}_2$ represents the Minkowski sum of sets $\mathcal{S}_1$ and $\mathcal{S}_2$, defined as $\mathcal{S}_1\bigoplus\mathcal{S}_2=\{s_1+s_2|s_1\in \mathcal{S}_1,s_2\in \mathcal{S}_2\}$. $\mathcal{S}_1\backslash\mathcal{S}_2$ denotes the set minus operation, where elements belonging to $\mathcal{S}_1$ and not belonging to $\mathcal{S}_2$ are chosen. $|\mathcal{S}|$ represents the cardinality of the set $\mathcal{S}$. $\mathcal{S}_1\times\mathcal{S}_2$ denotes the cartesian product of $\mathcal{S}_1$ and $\mathcal{S}_2$. $\phi$ denotes the empty set. 

\section{System, Opacity and Attack Models}
\noindent \emph{\bf System Model:} We consider a linear time-invariant system under normal (unattacked) operation (represented by $\Gamma$): 
\begin{align*}
\Gamma\mathpunct{:}\quad\begin{aligned} x(k+1)&=Ax(k)+Bu(k),\\
y(k)&=Cx(k)+Du(k),\end{aligned}\numberthis\label{normal_model}
\end{align*}
where $x\in \mathbb{R}^n, y\in\mathbb{R}^m, u\in \mathbb{R}^p, k\in \mathbb{Z}$ represent the state, output, normal input and time instant, respectively. 
%In this paper, we analyze opacity in terms of the ability of the system to keep the knowledge of the initial states of $\Gamma$ confidential from an eavesdropper who has access to the output values. 
Let $\mathcal{X}_0$ be the set of initial states in which the system is allowed to begin. 
%A secret set $\mathcal{X}_s$ can be made partially or fully opaque by having a non-secret set $\mathcal{X}_{ns}$ for which the system produces output trajectories that are same as some or all of those produced by the system with initial state set $\mathcal{X}_s$. 
Let $U(k)=\begin{bmatrix}u(0)^T & u(1)^T & \ldots & u(k)^T\end{bmatrix}^T$ denote the input sequence (represented as a vector) until time instant $k$. Further, let $Y_{x(0),U(k)}$ denote the output sequence (vector) produced by applying the input sequence $U(k)$ to the initial state $x(0)$. The output sequence can be written as:
\begin{align*}
Y_{x(0),U(k)}=O_k x(0)+F_k^{\Gamma}U(k),\numberthis \label{nop-seq}
\end{align*}
where $O_k$ and $F_k^{\Gamma}$ are extended observability and forced response matrices, respectively, and are given by:
%For $i\in \{s,ns\}$, let $u_i(k)$ and $U_i(k)=\begin{bmatrix}u_i(0)^T & u_i(1)^T & \ldots & u_i(k)^T\end{bmatrix}^T$ denote the input at $k$ and the input sequence until $k$, respectively, on a system which starts at initial state $x_i(0)$. Also, let $\mathcal{U}_i(k)$ represent the complete set of allowable input sequences $U_i(k)$, i.e., $U_i(k)\in \mathcal{U}_i(k)$.
%Therefore, the output trajectories produced by a system, denoted as $Y_{x_i(0),U_i(k)}$, can be expressed in terms of the observability matrix $O_k$ as: $Y_{x_i(0),U_i(k)}=O_kx_i(0)+F_k^{\Gamma}U_i(k)\enskip\forall k\geq 0$,
\begin{align*}
O_k&=\begin{bmatrix}C^T & (CA)^T & \ldots & (CA^k)^T\end{bmatrix}^T,\numberthis \label{obsv}
\end{align*}
\begin{align*}
F_k^{\Gamma}&=
\begin{bmatrix}
D & 0 & \ldots & 0\\
CB & D & \ldots & 0\\
\vdots & \vdots & \ddots & \vdots\\
CA^{k-1}B & CA^{k-2}B & \ldots & D\\
\end{bmatrix} &&\text{ for }k\geq 1,\numberthis \label{ip-op}
\end{align*}
and $F_k^{\Gamma} = D$ for $k=0$. We assume that system $\Gamma$ is observable. 

\noindent \textbf{Opacity Model:} We consider a potential eavesdropper present in the system whose goal is to gain information about the initial state of the system using the outputs. Let $\mathcal{X}_s \subseteq \mathcal{X}_0 $ denote the set of secret initial states  that a system operator wishes to keep private from the eavesdropper. The remaining set of non-secret initial states is denoted by $\mathcal{X}_{ns}= \mathcal{X}_0\backslash\mathcal{X}_s$. Any element of $\mathcal{X}_{ns}$ is not considered sensitive to disclosure. We use $x_s(0)$ and $x_{ns}(0)$ to denote individual elements in $\mathcal{X}_s$ and $\mathcal{X}_{ns}$, respectively. We assume that the eavesdropper has knowledge of the system matrices $A,B,C,D,$ and the initial state sets $\mathcal{X}_{s}$ and $\mathcal{X}_{ns}$. Next, we provide opacity definitions corresponding to system \eqref{normal_model}.\par
\begin{definition}[\emph{Opacity of Initial States}] \label{def:opacity_state} 
A secret initial state $x_s(0)\in \mathcal{X}_s$ is \emph{opaque} with respect to a non-secret initial state set $\mathcal{X}_{ns}^\prime\subseteq\mathcal{X}_{ns}$ if for all $k\geq 0$ the following property holds: for every $U_{s}(k)$, there exist $x_{ns}(0)\in \mathcal{X}_{ns}^{\prime}$ and $U_{ns}(k)$ such that $Y_{x_s(0),U_s(k)}=Y_{x_{ns}(0),U_{ns}(k)}$. We denote this relation by $x_s(0)\xrightarrow{\text{o}}\mathcal{X}_{ns}^\prime$ and sometimes use the term opaque for such $x_s(0)$. We call $x_s(0)$ \emph{transparent} if it is not opaque. \hfill $\square$
\end{definition}

The opacity definition implies that the same output sequence can result from either a secret or a non-secret initial state (with appropriate control input sequences). Therefore, the eavesdropper who observes the output sequence cannot distinguish whether the system started from a secret or non-secret initial state (assuming it does not have access to input sequences). This makes the secret initial state opaque. Next, we present opacity definitions for sets.

\begin{definition}[\emph{Strong Opacity of Sets}] 
The secret initial state set $\mathcal{X}_s$ is \emph{strongly opaque} with respect to non-secret initial state set $\mathcal{X}_{ns}^\prime\subseteq\mathcal{X}_{ns}$, if for \emph{every} $x_s(0)\in\mathcal{X}_s$, it holds that $x_s(0)\xrightarrow{\text{o}}\mathcal{X}_{ns}^\prime$. We denote this relation by $\mathcal{X}_{s}\xrightarrow{\text{s.o.}}\mathcal{X}_{ns}^\prime$ and sometimes use the term (strongly) opaque for such $\mathcal{X}_s$.\hfill $\square$
\end{definition}

\begin{definition}[\emph{Weak Opacity of Sets}] The secret initial state set $\mathcal{X}_s$ is \emph{weakly opaque} with respect to non-secret initial state set $\mathcal{X}_{ns}^\prime\subseteq\mathcal{X}_{ns}$, if for \emph{some} $x_s(0)\in\mathcal{X}_s$, it holds that $x_s(0)\xrightarrow{\text{o}}\mathcal{X}_{ns}^\prime$. We denote this relation by $\mathcal{X}_{s}\xrightarrow{\text{w.o.}}\mathcal{X}_{ns}^\prime$.\hfill $\square$
\end{definition}
\begin{figure}[htp]
    \centering
    \includegraphics[width=7cm]{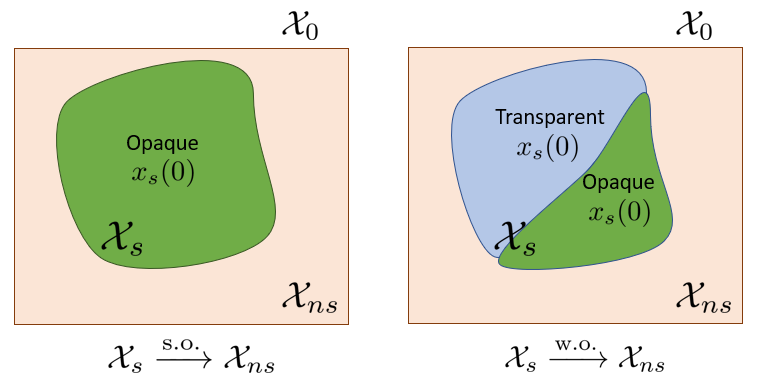}
    \caption{The left and right figures represent strongly and weakly opaque sets, respectively. Note that the green sets in both figures are (strongly) opaque.}
    \label{fig:Opacity_Definition}
\end{figure}
Strong (weak) opacity implies that all (some) states in the set $\mathcal{X}_s$ are opaque, as illustrated in Fig. \ref{fig:Opacity_Definition}.

\begin{remark}
The above definitions are also referred to as `initial state opacity' in some papers (e.g. definition III.1 in \cite{ApproxOpacity_IEEETAC_2021}). Further, these definitions differ from the definitions of $\mathcal{K}$-ISO used in \cite{OpacityLinearSystems_IEEETAC_2020}. In \cite{OpacityLinearSystems_IEEETAC_2020}, the outputs at \emph{specific time instants} corresponding to secret and non-secret initial states should match, whereas in our definitions, the whole \emph{output sequences} should match. \hfill $\square$
\end{remark}
Next, we define opacity ordering of two opaque sets. This will be used later to analyze the trade-off between opacity and attack detectability.
\begin{definition}[\emph{Opacity Ordering}]
Given two (strongly) opaque sets $\mathcal{X}_s^1$ and $\mathcal{X}_s^2$, we say that $\mathcal{X}_s^1$ is \emph{more opaque} than $\mathcal{X}_s^2$ if $\mathcal{X}_s^2\subset \mathcal{X}_s^1$. \hfill $\square$
\end{definition}

% {\color{red} Include the discussion on why we need to define these subsets and what do they represent.}
% For a system, given $\mathcal{X}_s$, let $\mathcal{X}_{ns}^o(\mathcal{X}_s)$ represent a subset of $\mathcal{X}_{ns}$ such that for all $x_{ns}^o(0)\in\mathcal{X}_{ns}^o(\mathcal{X}_s)$, $\mathcal{X}_s\xrightarrow{\text{w.o.}}x_{ns}^o(0)$. Similarly, given $\mathcal{X}_{ns}$, let $\mathcal{X}_{s}^o(\mathcal{X}_{ns})$ represent a subset of $\mathcal{X}_{s}$ such that $\mathcal{X}_{s}^o(\mathcal{X}_{ns})\xrightarrow{\text{s.o.}}\mathcal{X}_{ns}$. \par
% If parameters such as $\mathcal{X}_0$ or system matrices are modified, we say that opacity is increased if for an arbitrary set $\mathcal{X}_s$ or $\mathcal{X}_{ns}$, the corresponding sets $\mathcal{X}_{ns}^o(\mathcal{X}_s)$ and $\mathcal{X}_{s}^o(\mathcal{X}_{ns})$, respectively, under original parameters are a subset of $\mathcal{X}_{ns}^o(\mathcal{X}_s)$ and $\mathcal{X}_{s}^o(\mathcal{X}_{ns})$, respectively, under modified parameters.
%Finally, we can obtain the set of all output trajectories formed by these input sequences and initial states as: $\mathcal{Y}_{i}(k)=\bigcup\limits_{x_i(0)\in\mathcal{X}_i,U_i(k)\in\mathcal{U}_i(k)} Y_{x_i(0),U_i(k)}$. These terms are used to characterize the notion of opacity in the subsequent sections.\par

\noindent \emph{\bf Attack Model:} We consider an attacker\footnote{The attacker and the eavesdropper can be a single entity or two different entities.} that is capable of injecting malicious attack inputs in the actuators and modify sensor readings of the system $\Gamma$. Let the attack inputs be denoted by $\tilde{u}(k)$. We allow the attack inputs to be injected via channels that are different than the channels for normal inputs. We model this by using matrices $\tilde{B}$ and $\tilde{D}$ that can be different from $B$ and $D$.

%Now let us consider the system in the presence of an attacker who is capable of injecting attack inputs to the system to modify actuator and sensor readings. 

%The attack input $\tilde{u}(k)$ may be injected by the attacker through the same channels as that used by the operator or may also be injected through arbitrary channels. Therefore, the state matrix  $\tilde{B}$ and output matrix  $\tilde{D}$ through which attacks may be performed may or may not be related to $B$ and $D$ matrices. 

Since the normal input $u(k)$ is known to the operator, its effect may be eliminated for the purposes of attack detection. Therefore, we set $u(k)=0\enskip\forall \: k\geq 0$ for the attack model. The attack model (represented by $\tilde{\Gamma}$) is given as:
\begin{align*}
\tilde{\Gamma}\mathpunct{:}\quad\begin{aligned}\tilde{x}(k+1)&=A\tilde{x}(k)+\tilde{B}\tilde{u}(k),\\
\tilde{y}(k)&=C\tilde{x}(k)+\tilde{D} \tilde{u}(k),
\end{aligned}\numberthis\label{attack_model}
\end{align*}
where $\tilde{x}\in \mathbb{R}^n$ and $\tilde{y}\in\mathbb{R}^m $ denote the attacked states and outputs, respectively, and  $\tilde{u}\in \mathbb{R}^q$. Let $\tilde{U}(k)=\begin{bmatrix}\tilde{u}(0)^T & \tilde{u}(1)^T & \ldots & \tilde{u}(k)^T\end{bmatrix}^T$ denote the attack input sequence. Further, let $\tilde{Y}_{x(0),\tilde{U}(k)}$ denote the output sequence (vector) produced by applying the input sequence $\tilde{U}(k)$ to the initial state $x(0)$, which can  be expressed as:
\begin{align*}
\tilde{Y}_{x(0),\tilde{U}(k)}=O_kx(0)+F_k^{\tilde{\Gamma}}\tilde{U}(k),\numberthis \label{aop-seq}
\end{align*}
where $F_k^{\tilde{\Gamma}}$ is computed by replacing $B$ and $D$ by $\tilde{B}$ and $\tilde{D}$, respectively, in the expression for $F_k^{\Gamma}$ in \eqref{ip-op}.

We assume that the attacker knows the system matrices $A,B,C,D$ and the initial state set $\mathcal{X}_0$. Further, we assume that $[\tilde{B}^T,\tilde{D}^T]^T$ is full column rank. Next we present a definition for attacks that cannot be detected.

%The set of all output trajectories that can be produced by a system for this setting is denoted by $\tilde{\mathcal{Y}}(k)=\bigcup\limits_{x(0)\in\mathcal{X}_0,\tilde{U}(k)\in\tilde{\mathcal{U}}(k)} \tilde{Y}_{x(0),\tilde{U}(k)}$.\par

\begin{definition}[\emph{Undetectable Attacks} \cite{DynamicDetection_IEEETAC_2017}] \label{def:attack_det}
An attack $\tilde{U}(k)$ is said to be undetectable if there exist initial states $x(0),x^\prime(0)\in \mathcal{X}_0$ such that  $\tilde{Y}_{x(0),\tilde{U}(k)}=\tilde{Y}_{x^\prime (0),0}$, or equivalently, $\tilde{Y}_{x(0)-x^\prime (0),\tilde{U}(k)}=0$. We denote an undetectable attack sequence by $\tilde{U}_u(k)=\begin{bmatrix}\tilde{u}_u(0)^T & \tilde{u}_u(1)^T & \ldots & \tilde{u}_u(k)^T\end{bmatrix}^T$ and the set of all undetectable attacks by $\tilde{\mathcal{U}}_u(k)$.\hfill $\square$
\end{definition}

For undetectable attacks, the output produced by the system is same as the output produced by a zero attack input sequence (no attack) with appropriate initial conditions. Therefore, an attack detector\footnote{The attack detector is a dynamic detector as defined in \cite{DynamicDetection_IEEETAC_2017}, which operates on the entire output sequences.} that uses the outputs cannot determine whether the system is under attack or not. The existence of undetectable attacks depends on the weakly unobservable subspace of the system, which we define next.

\begin{definition}[\emph{Weakly Unobservable Subspace (WUS)} \cite{High-order_ECC_2007}] 
The weakly unobservable subspace of system \eqref{normal_model} (denoted by $\mathcal{V}(\Gamma)$) is defined as:
\begin{align*}
   \mathcal{V}(\Gamma)& =\{x\in\mathbb{R}^n:\exists\: U(k)\text{ such that }Y_{x,U(k)}=0,\:\forall\:k\geq 0\} \\
   &= \{x\in\mathbb{R}^n:\exists\: U(n-1)\text{ such that }Y_{x,U(n-1)}=0\},
\end{align*}
where the second equality follows from the Cayley-Hamilton Theorem.\hfill $\square$
\end{definition}

The subspace $\mathcal{V}(\tilde{\Gamma})$ is fundamentally connected to existence of undetectable attacks. In particular, it is known that if $\mathcal{V}(\tilde{\Gamma})\neq 0$, then there exists an attack sequence $\tilde{U}_u(k)$ in \eqref{attack_model} that is undetectable for all $k\geq 0$ \cite{AttackDetection_IEEETAC_2013}, \cite{High-order_ECC_2007}. In the next section, we show that $\mathcal{V}(\Gamma)$ is also connected to the opacity, and use this fact to characterize the trade-off between opacity and attack detectability in Section \ref{sec:tradeoffs}.

\section{Characterization of Opaque Sets} \label{sec:opacity}
We begin by defining conditions for opacity of \emph{individual} initial states, followed by the conditions for opacity of \emph{sets} of initial states.

\begin{lemma}\label{Opaque_Condition}
Given different initial states $x_s(0), x_{ns}(0)$, we have $x_s(0)\xrightarrow{\text{o}}\{x_{ns}(0)\}$ iff $x_s(0)-x_{ns}(0)\in \mathcal{V}(\Gamma)$.
\end{lemma}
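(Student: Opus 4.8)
The plan is to translate both sides of the equivalence into the common language of the output formula \eqref{nop-seq} and then reduce the opacity requirement to a linear-algebraic membership statement matching the definition of $\mathcal{V}(\Gamma)$. Writing $\delta = x_s(0) - x_{ns}(0)$, I would substitute \eqref{nop-seq} into the defining equality $Y_{x_s(0),U_s(k)} = Y_{x_{ns}(0),U_{ns}(k)}$ and cancel the $F_k^{\Gamma}$-terms to obtain
\begin{align*}
O_k\delta = F_k^{\Gamma}\big(U_{ns}(k) - U_s(k)\big).
\end{align*}
The key observation is that since $U_s(k)$ is given but $U_{ns}(k)$ is free, the difference $U_{ns}(k) - U_s(k)$ ranges over all input sequences; hence the opacity requirement at horizon $k$ is equivalent to $O_k\delta \in \mathcal{R}(F_k^{\Gamma})$, i.e. to the existence of some $W(k)$ with $O_k\delta + F_k^{\Gamma}W(k) = 0$. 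This last identity is exactly $Y_{\delta,W(k)} = 0$, the condition appearing in the definition of the WUS.

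For sufficiency ($\Leftarrow$), I would start from $\delta \in \mathcal{V}(\Gamma)$, which furnishes an input $U^*$ with $Y_{\delta,U^*(k)} = 0$ for all $k$, i.e. $O_k\delta = -F_k^{\Gamma}U^*(k)$. Then, given any $U_s(k)$, the choice $U_{ns}(k) = U_s(k) - U^*(k)$ makes the two output sequences coincide for every $k$, establishing $x_s(0)\xrightarrow{\text{o}}\{x_{ns}(0)\}$. The single consistent input $U^*$ supplied by the WUS definition is precisely what lets the same construction work simultaneously across all horizons.

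For necessity ($\Rightarrow$), I would exploit that opacity must hold for \emph{every} $U_s(k)$, so in particular for $U_s(k) = 0$ at the horizon $k = n-1$. This yields a $U_{ns}(n-1)$ with $O_{n-1}\delta = F_{n-1}^{\Gamma}U_{ns}(n-1)$, hence $Y_{\delta,-U_{ns}(n-1)} = 0$, so that $\delta \in \mathcal{V}(\Gamma)$ by the Cayley--Hamilton form of the WUS definition.

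The step I expect to be the main obstacle is handling the nested quantifier structure ``for all $k$, for every $U_s(k)$, there exists $U_{ns}(k)$'' correctly, and in particular arguing that this collapses to the single membership $\delta\in\mathcal{V}(\Gamma)$. The reduction of ``for every $U_s(k)$'' to a range-space condition, together with the Cayley--Hamilton collapse of ``for all $k$'' to the single horizon $k=n-1$, is what makes the two seemingly different conditions coincide; care is needed to ensure that the input witnessing WUS membership can be selected consistently across horizons rather than chosen separately for each $k$.
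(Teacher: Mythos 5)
Your proposal is correct and follows essentially the same route as the paper: both directions reduce the opacity equality to $O_k\delta + F_k^{\Gamma}\bigl(U_s(k)-U_{ns}(k)\bigr)=0$ and identify this with membership in $\mathcal{V}(\Gamma)$. Your necessity argument is in fact slightly tidier than the paper's, since specializing to $U_s(k)=0$ at horizon $k=n-1$ and invoking the Cayley--Hamilton form of the WUS definition sidesteps the quantifier-order issue (a single input witnessing zero output for all $k$ versus one input per $k$) that the paper's version passes over silently.
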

\begin{proof}
If: From definition of $\mathcal{V}(\Gamma)$, there exists $U(k)$ such that $O_k(x_s(0)-x_{ns}(0))+F_k^\Gamma U(k)=0\:\forall k\geq 0$. Therefore, for all $U_s(k)$, $U_{ns}(k)$ can be chosen such that $U_s(k)-U_{ns}(k)=U(k)$. Consequently, we have for all $U_s(k)$, the following relation: $O_kx_s(0)+F_k^\Gamma U_s(k)=O_kx_{ns}(0)+F_k^\Gamma U_{ns}(k)$, which implies that $x_s(0)\xrightarrow{\text{o}}\{x_{ns}(0)\}$.\par
Only if: For any $k\geq 0$, we have that for all $U_s(k)$ there exists $U_{ns}(k)$ such that $O_kx_s(0)+F_k^\Gamma U_s(k)=O_kx_{ns}(0)+F_k^\Gamma U_{ns}(k)$ or $O_k(x_s(0)-x_{ns}(0))+F_k^\Gamma (U_s(k)-U_{ns}(k))=0$. From this, we can see that for all $k\geq 0$ there exists an input sequence $U_s(k)-U_{ns}(k)$ such that with the initial condition $x_s(0)-x_{ns}(0)$, the output is 0. Therefore, $x_s(0)-x_{ns}(0)\in\mathcal{V}(\Gamma)$
\end{proof}
\begin{corollary}\label{Corr_Opaque_State}
The following two statements hold true:\par
\noindent 1. Given $x_s(0)$, a state $x_{ns}(0)\neq x_{s}(0)$ satisfies $x_s(0)\xrightarrow{\text{o}}\{x_{ns}(0)\}$ iff $x_{ns}(0) \in x_{s}(0)\bigoplus\mathcal{V}(\Gamma)$.\par
\noindent 2. Given $x_{ns}(0)$, a state $x_s(0)\neq x_{ns}(0)$ satisfies $x_s(0)\xrightarrow{\text{o}}\{x_{ns}(0)\}$ iff $x_s(0)\in x_{ns}(0)\bigoplus\mathcal{V}(\Gamma)$.
\end{corollary}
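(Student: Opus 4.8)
The plan is to derive both statements directly from Lemma~\ref{Opaque_Condition}, using only two elementary facts: that $\mathcal{V}(\Gamma)$ is a linear subspace (so it is closed under negation), and that by the definition of the Minkowski sum, a point lies in a translated subspace exactly when its offset from the translation vector lies in the subspace, i.e.\ $y\in x\bigoplus\mathcal{V}(\Gamma)$ iff $y-x\in\mathcal{V}(\Gamma)$. I would first record this membership equivalence explicitly, since it is the bridge that converts the difference condition supplied by Lemma~\ref{Opaque_Condition} into the set-membership form required by the corollary.

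For Statement~2 I would then simply chain the equivalences: by Lemma~\ref{Opaque_Condition}, $x_s(0)\xrightarrow{\text{o}}\{x_{ns}(0)\}$ iff $x_s(0)-x_{ns}(0)\in\mathcal{V}(\Gamma)$, and by the Minkowski-sum identity above (with $y=x_s(0)$ and $x=x_{ns}(0)$) the latter holds iff $x_s(0)\in x_{ns}(0)\bigoplus\mathcal{V}(\Gamma)$. This gives the result with no further work.

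For Statement~1 the only additional ingredient is the symmetry of a subspace under negation: since $\mathcal{V}(\Gamma)$ is a subspace, $x_s(0)-x_{ns}(0)\in\mathcal{V}(\Gamma)$ iff $-(x_s(0)-x_{ns}(0))=x_{ns}(0)-x_s(0)\in\mathcal{V}(\Gamma)$. Combining this with Lemma~\ref{Opaque_Condition} and then applying the Minkowski-sum identity (now with $y=x_{ns}(0)$ and $x=x_s(0)$) yields $x_s(0)\xrightarrow{\text{o}}\{x_{ns}(0)\}$ iff $x_{ns}(0)\in x_s(0)\bigoplus\mathcal{V}(\Gamma)$.

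There is no substantive obstacle here; the corollary is essentially a restatement of Lemma~\ref{Opaque_Condition}. The one point worth stating carefully is precisely the step that feels almost too obvious to mention: it is the subspace structure of $\mathcal{V}(\Gamma)$ (closure under negation) that lets the secret and non-secret roles be interchanged, so that the \emph{same} opacity relation can be expressed either as ``$x_{ns}(0)$ lies in a neighborhood of $x_s(0)$'' or as ``$x_s(0)$ lies in a neighborhood of $x_{ns}(0)$.'' I would make sure to invoke this explicitly rather than leaving it implicit, as it is the sole reason the two parts of the corollary take a symmetric form.
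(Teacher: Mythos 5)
Your proposal is correct and matches the paper's intent exactly: the paper states this corollary with no separate proof, treating it as an immediate consequence of Lemma~\ref{Opaque_Condition}, and your argument (translating the difference condition $x_s(0)-x_{ns}(0)\in\mathcal{V}(\Gamma)$ into Minkowski-sum membership, using closure of the subspace under negation for the symmetric form in Statement~1) is precisely the reasoning being left implicit. Nothing is missing.
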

Lemma \ref{Opaque_Condition} provides the necessary and sufficient condition for an initial state to be opaque and shows that opacity of initial states is fundamentally connected and completely determined by $\mathcal{V}(\Gamma)$. Further, Corollary \ref{Corr_Opaque_State} shows that the set of non-secret states that makes a secret state opaque (and vice-versa) is constrained by $\mathcal{V}(\Gamma)$. Next, we extend these results to specify conditions for weak and strong opacity of \emph{sets} of initial states.

\begin{lemma}\label{WISO_Condition}
Given non-empty and disjoint sets $\mathcal{X}_s$ and $\mathcal{X}_{ns}^\prime\subseteq\mathcal{X}_{ns}$, we have $\mathcal{X}_s\xrightarrow{\text{w.o.}}\mathcal{X}_{ns}^\prime$ iff $(\mathcal{X}_s\bigoplus -\mathcal{X}_{ns}^\prime)\cap \mathcal{V}(\Gamma)\neq \phi$.
\end{lemma}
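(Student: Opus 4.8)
The plan is to lift the single-state characterization of Lemma~\ref{Opaque_Condition} to the set level in two stages, and then to recognize the resulting condition as a Minkowski-sum intersection. As an intermediate claim I would first show that, for a fixed $x_s(0)\in\mathcal{X}_s$, the relation $x_s(0)\xrightarrow{\text{o}}\mathcal{X}_{ns}^\prime$ holds iff there exists a \emph{single} $x_{ns}(0)\in\mathcal{X}_{ns}^\prime$ with $x_s(0)-x_{ns}(0)\in\mathcal{V}(\Gamma)$. Granting this claim, the lemma follows quickly: by definition $\mathcal{X}_s\xrightarrow{\text{w.o.}}\mathcal{X}_{ns}^\prime$ means $x_s(0)\xrightarrow{\text{o}}\mathcal{X}_{ns}^\prime$ for \emph{some} $x_s(0)\in\mathcal{X}_s$, so combining the two existential quantifiers shows weak opacity to be equivalent to the existence of $x_s(0)\in\mathcal{X}_s$ and $x_{ns}(0)\in\mathcal{X}_{ns}^\prime$ with $x_s(0)-x_{ns}(0)\in\mathcal{V}(\Gamma)$. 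Since $\mathcal{X}_s\bigoplus-\mathcal{X}_{ns}^\prime$ is precisely the set of all such differences $x_s(0)-x_{ns}(0)$, this existence statement is exactly the condition $(\mathcal{X}_s\bigoplus-\mathcal{X}_{ns}^\prime)\cap\mathcal{V}(\Gamma)\neq\phi$.

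It remains to justify the intermediate claim. The sufficiency direction is immediate from Lemma~\ref{Opaque_Condition}: if some $x_{ns}(0)\in\mathcal{X}_{ns}^\prime$ satisfies $x_s(0)-x_{ns}(0)\in\mathcal{V}(\Gamma)$, then $x_s(0)\xrightarrow{\text{o}}\{x_{ns}(0)\}$, and since the matching non-secret state may be drawn from the larger set $\mathcal{X}_{ns}^\prime\supseteq\{x_{ns}(0)\}$, this already witnesses $x_s(0)\xrightarrow{\text{o}}\mathcal{X}_{ns}^\prime$. For necessity I would specialize the opacity definition to the input $U_s(n-1)=0$ at horizon $k=n-1$: the definition then furnishes some $x_{ns}(0)\in\mathcal{X}_{ns}^\prime$ and an input $U_{ns}(n-1)$ with $O_{n-1}(x_s(0)-x_{ns}(0))+F_{n-1}^{\Gamma}(-U_{ns}(n-1))=0$. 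This exhibits an input driving the output of the initial condition $x_s(0)-x_{ns}(0)$ to zero over the horizon $n-1$, which by the Cayley--Hamilton characterization in the definition of $\mathcal{V}(\Gamma)$ yields $x_s(0)-x_{ns}(0)\in\mathcal{V}(\Gamma)$, as required.

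The disjointness and non-emptiness hypotheses enter only to keep this reduction consistent with Lemma~\ref{Opaque_Condition}, which is stated for \emph{different} initial states: disjointness of $\mathcal{X}_s$ and $\mathcal{X}_{ns}^\prime$ guarantees $x_s(0)\neq x_{ns}(0)$ for every admissible pair, so the candidate differences never collapse to the trivial element $0$ (which lies in every subspace and would otherwise make the intersection vacuously nonempty), while non-emptiness ensures the quantifiers range over a nonempty domain. The step I expect to be the main obstacle is the necessity half of the intermediate claim, because the opacity definition a priori allows the matching non-secret state $x_{ns}(0)$ to depend on the chosen secret input $U_s(k)$, whereas the target condition asks for a single, input-independent witness. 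The resolution is to observe that it suffices to test one convenient input ($U_s=0$) at one horizon ($k=n-1$): the witness $x_{ns}(0)$ so obtained, together with Lemma~\ref{Opaque_Condition}, automatically upgrades to opacity for \emph{all} inputs and all $k$, so no genuine dependence on $U_s(k)$ survives.
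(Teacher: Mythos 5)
Your proposal is correct and follows essentially the same route as the paper: both reduce the set-level statement to the single-state characterization of Lemma~\ref{Opaque_Condition} and then read off the Minkowski-sum condition. The one difference is that you explicitly prove the quantifier-reduction step (that $x_s(0)\xrightarrow{\text{o}}\mathcal{X}_{ns}^\prime$ forces a single input-independent witness $x_{ns}(0)$ with $x_s(0)-x_{ns}(0)\in\mathcal{V}(\Gamma)$, obtained by testing $U_s(n-1)=0$), a step the paper's contrapositive argument leaves implicit when it invokes Lemma~\ref{Opaque_Condition} to conclude transparency with respect to the whole set $\mathcal{X}_{ns}^\prime$.
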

\begin{proof}
If: Let  $(\mathcal{X}_s\bigoplus -\mathcal{X}_{ns}^\prime)\cap \mathcal{V}(\Gamma)\neq \phi$. Then there exists $x_s(0)\in \mathcal{X}_s$ and $x_{ns}(0)\in \mathcal{X}_{ns}^\prime$ such that $x_s(0)-x_{ns}(0)\in \mathcal{V}(\Gamma)$. Due to Lemma \ref{Opaque_Condition}, $x_s(0)\xrightarrow{\text{o}}\{x_{ns}(0)\}$, which implies that $\mathcal{X}_s\xrightarrow{\text{w.o.}}\mathcal{X}_{ns}^\prime$.\par
Only if: We prove by contrapositive argument. Let $(\mathcal{X}_s\bigoplus -\mathcal{X}_{ns}^\prime)\cap \mathcal{V}(\Gamma) = \phi$. Therefore there exists no pair $(x_s(0),x_{ns}(0))\in\mathcal{X}_s\times\mathcal{X}_{ns}^\prime$ such that $x_s(0)-x_{ns}(0)\in \mathcal{V}(\Gamma)$. Hence, by Lemma \ref{Opaque_Condition}, all $x_s(0)\in\mathcal{X}_s$ are transparent with respect to $\mathcal{X}_{ns}^\prime$. Therefore, it is not true that $\mathcal{X}_s\xrightarrow{\text{w.o.}}\mathcal{X}_{ns}^\prime$.
\end{proof}
\begin{lemma}\label{SISO_Condition}
Given non-empty and disjoint sets $\mathcal{X}_s$ and $\mathcal{X}_{ns}^\prime\subseteq\mathcal{X}_{ns}$, we have $\mathcal{X}_s\xrightarrow{\text{s.o.}}\mathcal{X}_{ns}^\prime$  iff $\mathcal{X}_s\subset \mathcal{X}_{ns}^\prime\bigoplus\mathcal{V}(\Gamma)$.
\end{lemma}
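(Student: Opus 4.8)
The plan is to prove both directions by reducing strong opacity of the set to the singleton-opacity condition of Lemma~\ref{Opaque_Condition} applied state-by-state, and then rewriting the resulting existential condition as a Minkowski-sum containment. The key observation is that $\mathcal{X}_s\xrightarrow{\text{s.o.}}\mathcal{X}_{ns}^\prime$ means, by definition, $x_s(0)\xrightarrow{\text{o}}\mathcal{X}_{ns}^\prime$ for \emph{every} $x_s(0)\in\mathcal{X}_s$, while $\mathcal{X}_s\subset\mathcal{X}_{ns}^\prime\bigoplus\mathcal{V}(\Gamma)$ means that for every $x_s(0)\in\mathcal{X}_s$ there exist $x_{ns}(0)\in\mathcal{X}_{ns}^\prime$ and $v\in\mathcal{V}(\Gamma)$ with $x_s(0)=x_{ns}(0)+v$, i.e. $x_s(0)-x_{ns}(0)\in\mathcal{V}(\Gamma)$. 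Both sides thus quantify universally over $\mathcal{X}_s$, and the task reduces to matching the per-state conditions.

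For the ``if'' direction I would fix an arbitrary $x_s(0)\in\mathcal{X}_s$. The containment supplies some $x_{ns}(0)\in\mathcal{X}_{ns}^\prime$ with $x_s(0)-x_{ns}(0)\in\mathcal{V}(\Gamma)$; Lemma~\ref{Opaque_Condition} then yields $x_s(0)\xrightarrow{\text{o}}\{x_{ns}(0)\}$, and since $\{x_{ns}(0)\}\subseteq\mathcal{X}_{ns}^\prime$ this upgrades to $x_s(0)\xrightarrow{\text{o}}\mathcal{X}_{ns}^\prime$ (the non-secret witness required by Definition~\ref{def:opacity_state} can always be taken to be this same $x_{ns}(0)$). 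As $x_s(0)$ was arbitrary, every secret state is opaque with respect to $\mathcal{X}_{ns}^\prime$, giving strong opacity.

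The ``only if'' direction is the delicate step and is where I expect the main obstacle. Strong opacity gives $x_s(0)\xrightarrow{\text{o}}\mathcal{X}_{ns}^\prime$ for each $x_s(0)$, but Definition~\ref{def:opacity_state} permits the matching non-secret state to depend on both the horizon $k$ and the input $U_s(k)$; to invoke Lemma~\ref{Opaque_Condition} I must extract a \emph{single} $x_{ns}(0)$ valid for all $k$. The resolution is to exploit the finite-horizon characterization $\mathcal{V}(\Gamma)=\{x:\exists\,U(n-1)\text{ s.t. }Y_{x,U(n-1)}=0\}$ from the WUS definition: instantiating the opacity property at $k=n-1$ with $U_s(n-1)=0$ produces $x_{ns}(0)\in\mathcal{X}_{ns}^\prime$ and $U_{ns}(n-1)$ satisfying $O_{n-1}(x_s(0)-x_{ns}(0))+F_{n-1}^{\Gamma}(-U_{ns}(n-1))=0$, so that $x_s(0)-x_{ns}(0)\in\mathcal{V}(\Gamma)$. (Equivalently, one may directly invoke the reduction already implicit in the proof of Lemma~\ref{WISO_Condition}, that $x_s(0)\xrightarrow{\text{o}}\mathcal{X}_{ns}^\prime$ holds iff some $x_{ns}(0)\in\mathcal{X}_{ns}^\prime$ obeys $x_s(0)-x_{ns}(0)\in\mathcal{V}(\Gamma)$.) Either route places $x_s(0)\in\mathcal{X}_{ns}^\prime\bigoplus\mathcal{V}(\Gamma)$, and ranging over all $x_s(0)\in\mathcal{X}_s$ yields $\mathcal{X}_s\subseteq\mathcal{X}_{ns}^\prime\bigoplus\mathcal{V}(\Gamma)$.

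A final detail to address is the reconciliation of $\subseteq$ with the proper inclusion $\subset$ as written in the statement. Since $0\in\mathcal{V}(\Gamma)$ we have $\mathcal{X}_{ns}^\prime\subseteq\mathcal{X}_{ns}^\prime\bigoplus\mathcal{V}(\Gamma)$, and because $\mathcal{X}_{ns}^\prime$ is non-empty and disjoint from $\mathcal{X}_s$, the sum set always contains points (namely those of $\mathcal{X}_{ns}^\prime$) lying outside $\mathcal{X}_s$. Hence $\mathcal{X}_s\subseteq\mathcal{X}_{ns}^\prime\bigoplus\mathcal{V}(\Gamma)$ is automatically strict, and the two forms of the condition coincide, completing the equivalence.
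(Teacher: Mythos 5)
Your proof is correct and follows the same basic route as the paper: both directions reduce to the per-state criterion of Lemma~\ref{Opaque_Condition} and then repackage the resulting existential statement as the Minkowski-sum containment $\mathcal{X}_s\subset\mathcal{X}_{ns}^\prime\bigoplus\mathcal{V}(\Gamma)$. The ``if'' direction is essentially identical to the paper's. Where you diverge is the ``only if'' direction: the paper argues by contrapositive, assuming $\mathcal{X}_s\supseteq\mathcal{X}_{ns}^\prime\bigoplus\mathcal{V}(\Gamma)$ and splitting into the cases of equality and strict containment, whereas you argue directly by extracting, for each $x_s(0)$, a single witness $x_{ns}(0)$ from the instantiation $k=n-1$, $U_s(n-1)=0$ together with the finite-horizon characterization of $\mathcal{V}(\Gamma)$. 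Your version is arguably tighter on two points the paper elides: first, Definition~\ref{def:opacity_state} lets the non-secret witness depend on $k$ and $U_s(k)$, so passing from $x_s(0)\xrightarrow{\text{o}}\mathcal{X}_{ns}^\prime$ to a \emph{fixed} $x_{ns}(0)$ with $x_s(0)-x_{ns}(0)\in\mathcal{V}(\Gamma)$ genuinely requires the argument you give; second, the negation of $\mathcal{X}_s\subset\mathcal{X}_{ns}^\prime\bigoplus\mathcal{V}(\Gamma)$ is not $\mathcal{X}_s\supseteq\mathcal{X}_{ns}^\prime\bigoplus\mathcal{V}(\Gamma)$, so the paper's case split does not literally exhaust the contrapositive, while your direct proof plus the observation that disjointness and $0\in\mathcal{V}(\Gamma)$ force the inclusion to be strict cleanly covers the $\subseteq$ versus $\subset$ discrepancy. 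No gaps.
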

\begin{proof}
If: Since $\mathcal{X}_s\subset \mathcal{X}_{ns}^\prime\bigoplus\mathcal{V}(\Gamma)$, for all $x_s(0)\in \mathcal{X}_s$, there exists $x_{ns}(0)\in \mathcal{X}_{ns}^\prime$ such that $x_s(0)-x_{ns}(0)\in \mathcal{V}(\Gamma)$. Due to Lemma \ref{Opaque_Condition}, every $x_s(0)\in \mathcal{X}_s$ has a corresponding $x_{ns}(0)\in \mathcal{X}_{ns}^\prime$ such that $x_s(0)\xrightarrow{\text{o}}\{x_{ns}(0)\}$. This implies that $\mathcal{X}_s\xrightarrow{\text{s.o.}}\mathcal{X}_{ns}^\prime$.\par
Only if: We prove by contrapositive argument. Let $\mathcal{X}_s\supseteq \mathcal{X}_{ns}^\prime\bigoplus\mathcal{V}(\Gamma)$. We first consider $\mathcal{X}_s=\mathcal{X}_{ns}^\prime\bigoplus\mathcal{V}(\Gamma)$. This can be expanded as $\mathcal{X}_s=\mathcal{X}_{ns}^\prime\bigoplus 0\cup\mathcal{X}_{ns}^\prime\bigoplus (\mathcal{V}(\Gamma)\backslash 0)=\mathcal{X}_{ns}^\prime\cup\mathcal{X}_{ns}^\prime\bigoplus (\mathcal{V}(\Gamma)\backslash 0)$. However, since $\mathcal{X}_s$ and $\mathcal{X}_{ns}$ are disjoint, this is not possible and thus $\mathcal{X}_s\neq\mathcal{X}_{ns}^\prime\bigoplus\mathcal{V}(\Gamma)$. Let us then consider $\mathcal{X}_s\supset\mathcal{X}_{ns}^\prime\bigoplus\mathcal{V}(\Gamma)$. Therefore, $\mathcal{X}_s\cap (\mathcal{X}_{ns}^\prime\bigoplus\mathcal{V}(\Gamma))^c\neq\phi$. Thus, there exists at least one $x_s(0)\in \mathcal{X}_s$ such that there does not exist $x_{ns}(0)\in \mathcal{X}_{ns}^\prime$ for which  $x_s(0)-x_{ns}(0)\in \mathcal{V}(\Gamma)$. Hence, by Lemma \ref{Opaque_Condition}, such an $x_s(0)$ is transparent with respect to $\mathcal{X}_{ns}^\prime$. Therefore, it is not true that $\mathcal{X}_s\xrightarrow{\text{s.o.}}\mathcal{X}_{ns}^\prime$.
\end{proof}
Same as before, the conditions in Lemmas \ref{WISO_Condition} and \ref{SISO_Condition} are completely dependent on $\mathcal{V}(\Gamma)$.
\begin{remark}[\emph{Verifying Opacity Conditions}]
    The verification of opacity conditions in Lemmas \ref{WISO_Condition} and \ref{SISO_Condition} require  computation of Minkowski sum of sets. Algorithms to compute this for polytope sets in $\mathbb{R}^n$ are well developed in the literature, e.g. \cite{MinkowskiSum_JAMP_2014, MinkowskiSum_AICCMCGS_2015}. Also, Minkowski sum of sets with subspaces may be performed by computing the sum on the basis vectors of the subspace. An algorithm to find a basis for $\mathcal{V}(\Gamma)$ is found in \cite{Book_2002_ControlTheory}. The exact computation of these sets is not the focus of this paper and we defer this for future work. 
\end{remark}

Next, we analyze the largest possible opaque set for the system. Determining this largest set is important because it provides a fundamental limit beyond which a  larger opaque set cannot be constructed. To this aim, we formulate the following optimization problem:
\begin{equation}\label{eq_Xs_Opti}
\begin{aligned}
    \argmax_{\mathcal{X}_s} &\quad |\mathcal{X}_s| \\
    \text{ s.t. } & \quad  \mathcal{X}_s\xrightarrow{\text{s.o.}}\mathcal{X}_0\backslash\mathcal{X}_s.
\end{aligned}
\end{equation}

The next lemma provides a solution to the above optimization problem.

\begin{lemma}\label{max_sys-opacity}
For $\mathcal{X}_0=\mathbb{R}^n$, one solution to \eqref{eq_Xs_Opti} is $\mathcal{X}_s=\mathbb{R}^n\backslash\mathcal{V}(\Gamma)^{\perp}$, where $\mathcal{V}(\Gamma)^{\perp}$ is the orthogonal complement of $\mathcal{V}(\Gamma)$.
\end{lemma}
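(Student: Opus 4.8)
The plan is to first use Lemma~\ref{SISO_Condition} to convert the strong-opacity constraint into a purely geometric condition on the cosets of $\mathcal{V}(\Gamma)$, and then to establish maximality through those cosets. Setting $\mathcal{X}_{ns}=\mathbb{R}^n\backslash\mathcal{X}_s$ (automatically disjoint from $\mathcal{X}_s$), Lemma~\ref{SISO_Condition} says the constraint $\mathcal{X}_s\xrightarrow{\text{s.o.}}\mathcal{X}_0\backslash\mathcal{X}_s$ is equivalent to $\mathcal{X}_s\subset\mathcal{X}_{ns}\bigoplus\mathcal{V}(\Gamma)$. Since $\mathcal{X}_{ns}\bigoplus\mathcal{V}(\Gamma)=\bigcup_{x\in\mathcal{X}_{ns}}\bigl(x+\mathcal{V}(\Gamma)\bigr)$ is the union of all cosets of $\mathcal{V}(\Gamma)$ that meet $\mathcal{X}_{ns}$, I would rewrite the constraint as: every coset of $\mathcal{V}(\Gamma)$ intersecting $\mathcal{X}_s$ also intersects $\mathcal{X}_{ns}$. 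Because $\mathcal{X}_s$ and $\mathcal{X}_{ns}$ partition $\mathbb{R}^n$, a coset disjoint from $\mathcal{X}_s$ lies entirely in $\mathcal{X}_{ns}$, so the condition collapses to the clean statement that $\mathcal{X}_{ns}$ meets \emph{every} coset of $\mathcal{V}(\Gamma)$.

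Next I would verify feasibility of the proposed $\mathcal{X}_s=\mathbb{R}^n\backslash\mathcal{V}(\Gamma)^{\perp}$, for which $\mathcal{X}_{ns}=\mathcal{V}(\Gamma)^{\perp}$. Since $\mathbb{R}^n=\mathcal{V}(\Gamma)\bigoplus\mathcal{V}(\Gamma)^{\perp}$, every coset $x+\mathcal{V}(\Gamma)$ meets $\mathcal{V}(\Gamma)^{\perp}$ in exactly one point; hence $\mathcal{V}(\Gamma)^{\perp}$ is a complete set of coset representatives (a transversal) and in particular meets every coset. By the reformulation above the constraint holds, the sets are nonempty and disjoint, and $\mathcal{X}_s$ is a proper subset because $\mathcal{X}_{ns}\neq\phi$, so Lemma~\ref{SISO_Condition} applies.

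Finally I would prove optimality. Any feasible $\mathcal{X}_s$ has a complement $\mathcal{X}_{ns}$ that meets every coset of $\mathcal{V}(\Gamma)$, so the smallest a feasible complement can be is one representative per coset, which is precisely what $\mathcal{V}(\Gamma)^{\perp}$ realizes. To turn ``smallest complement'' into ``largest $\mathcal{X}_s$'' I would argue inclusion-maximality directly: if $\mathcal{X}_s'\supsetneq\mathbb{R}^n\backslash\mathcal{V}(\Gamma)^{\perp}$ were feasible, it would contain some $p\in\mathcal{V}(\Gamma)^{\perp}$, whence $\mathcal{X}_{ns}'\subseteq\mathcal{V}(\Gamma)^{\perp}\backslash\{p\}$; but the coset $p+\mathcal{V}(\Gamma)$ meets $\mathcal{V}(\Gamma)^{\perp}$ only at $p$, so $\mathcal{X}_{ns}'$ misses that coset, contradicting feasibility. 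Thus no feasible set strictly contains the candidate, and since any transversal yields such a maximal set, the phrase ``one solution'' is justified.

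The step I expect to be the main obstacle is reconciling the cardinality objective $|\mathcal{X}_s|$ with genuine optimality: for infinite sets, ``maximize $|\mathcal{X}_s|$'' and ``minimize $|\mathcal{X}_{ns}|$'' are not equivalent through naive cardinal subtraction, and many distinct feasible sets share the same cardinality. I would therefore anchor the argument in the coset/transversal structure---minimality of the complement among all coset-covers and inclusion-maximality of the candidate---rather than in cardinal arithmetic, and I would record the degenerate checks that the formula must reproduce, namely $\mathcal{V}(\Gamma)=0$ forcing $\mathcal{X}_s=\phi$ and $\mathcal{V}(\Gamma)=\mathbb{R}^n$ giving $\mathcal{X}_s=\mathbb{R}^n\backslash\{0\}$.
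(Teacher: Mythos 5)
Your proposal is correct and rests on the same core idea as the paper's proof: decompose $\mathbb{R}^n$ into cosets of $\mathcal{V}(\Gamma)$, observe that strong opacity of the complementary partition amounts to $\mathcal{X}_{ns}$ meeting every coset, and note that $\mathcal{V}(\Gamma)^{\perp}$ is a transversal (one point per coset). The paper reaches the coset picture via Corollary~\ref{Corr_Opaque_State} and then argues informally that keeping exactly one non-secret representative per coset yields ``the secret set with largest possible cardinality,'' which is exactly the step you flag as problematic: for infinite sets the cardinality objective in \eqref{eq_Xs_Opti} is degenerate, and the paper's per-coset counting does not literally establish optimality. Your substitution of inclusion-maximality (any strictly larger feasible $\mathcal{X}_s'$ would leave some coset $p+\mathcal{V}(\Gamma)$ uncovered by $\mathcal{X}_{ns}'$) is a sharper and genuinely well-posed version of the maximality claim, and your routing through Lemma~\ref{SISO_Condition} rather than Corollary~\ref{Corr_Opaque_State} is a cosmetic difference. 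In short: same construction and same witness, but your optimality argument is the more defensible one.
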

\begin{proof}
From Corollary \ref{Corr_Opaque_State}, for any $x_s(0)$, the corresponding $x_{ns}(0)$ such that $x_s(0)\xrightarrow{\text{o}}\{x_{ns}(0)\}$ belongs to the set $x_s(0)\bigoplus\mathcal{V}(\Gamma)$, which is the space formed by shifting $\mathcal{V}(\Gamma)$ parallel to itself to contain $x_s(0)$. Using these shifted spaces, we construct the largest opaque set.\par
If from each of such space we choose a single $x(0)$ to be $x_{ns}(0)$ and the remaining states to be $x_s(0)$, all the $x_s(0)$ in that space will be opaque due to the chosen $x_{ns}(0)$. This is the secret set with largest possible cardinality for this space. This secret set is also (strongly) opaque and its opacity is solely due to the chosen $x_{ns}(0)$ in this space. Finally, the union of all such spaces form $\mathcal{X}_0$. Therefore, choosing one $x_{ns}(0)$ from each space to form $\mathcal{X}_{ns}$ will result in $\mathcal{X}_s=\mathcal{X}_0\backslash\mathcal{X}_{ns}$ to be (strongly) opaque and to have the largest cardinality possible in the system. \par
One such $\mathcal{X}_{ns}$ is $\mathcal{V}(\Gamma)^{\perp}$, for which $\mathcal{X}_s=\mathbb{R}^n\backslash\mathcal{V}(\Gamma)^{\perp}$. This is because every vector of $\mathcal{V}(\Gamma)^{\perp}$ belongs to only one parallelly shifted space of $\mathcal{V}(\Gamma)$.
\end{proof}
Lemma \ref{max_sys-opacity} shows that the largest possible opaque set is constrained by $\mathcal{V}(\Gamma)$. This highlights the importance of $\mathcal{V}(\Gamma)$ in making the states opaque, and this is formalized in the following theorem.
\begin{theorem} \label{thm:opacity_WUS}
Consider two systems $\Gamma_1$ and $\Gamma_2$ and let $\mathcal{X}_0=\mathbb{R}^n$. For any (strongly) opaque set $\mathcal{X}_s^1$ in $\Gamma_1$, there exists a more (strongly) opaque set $\mathcal{X}_s^2$ in $\Gamma_2$ iff $\mathcal{V}(\Gamma_1)\subset\mathcal{V}(\Gamma_2)$.
\end{theorem}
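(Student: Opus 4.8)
The plan is to recast opacity in the language of cosets of the weakly unobservable subspace and then argue combinatorially. First I would record the working characterization that underlies everything: setting $\mathcal{X}_{ns}=\mathbb{R}^n\backslash\mathcal{X}_s$ in Lemma~\ref{SISO_Condition} (and using Lemma~\ref{Opaque_Condition} together with $\mathcal{V}=-\mathcal{V}$), a set $\mathcal{X}_s$ is strongly opaque in a system $\Gamma$ iff no coset of $\mathcal{V}(\Gamma)$ is entirely contained in $\mathcal{X}_s$; equivalently, every secret point shares its $\mathcal{V}(\Gamma)$-coset with a non-secret point. I would also restate ``more opaque'' simply as strict set containment $\mathcal{X}_s^1\subset\mathcal{X}_s^2$. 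With this dictionary the theorem becomes a statement about how the coset partition induced by $\mathcal{V}(\Gamma_2)$ refines, or fails to refine, that induced by $\mathcal{V}(\Gamma_1)$.

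For the ``if'' direction, assume $\mathcal{V}(\Gamma_1)\subset\mathcal{V}(\Gamma_2)$ and let $\mathcal{X}_s^1$ be any opaque set in $\Gamma_1$. I would first show $\mathcal{X}_s^1$ remains opaque in $\Gamma_2$: since $\mathcal{V}(\Gamma_1)\subseteq\mathcal{V}(\Gamma_2)$, each $\mathcal{V}(\Gamma_2)$-coset is a union of $\mathcal{V}(\Gamma_1)$-cosets, so a fully-secret $\mathcal{V}(\Gamma_2)$-coset would contain a fully-secret $\mathcal{V}(\Gamma_1)$-coset, contradicting opacity in $\Gamma_1$. Next, because $\mathcal{X}_s^1\neq\mathbb{R}^n$ there is a non-secret point $v$; by strictness of the inclusion its $\mathcal{V}(\Gamma_2)$-coset splits into at least two $\mathcal{V}(\Gamma_1)$-cosets, each of which (by opacity in $\Gamma_1$) carries a non-secret point, so that coset contains a non-secret point $v'\neq v$. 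Then $\mathcal{X}_s^2=\mathcal{X}_s^1\cup\{v\}$ is still opaque in $\Gamma_2$ (only $v$'s coset changed, and it retains $v'$) and strictly contains $\mathcal{X}_s^1$, which is the desired more opaque set.

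For the ``only if'' direction I would argue the contrapositive: if $\neg(\mathcal{V}(\Gamma_1)\subset\mathcal{V}(\Gamma_2))$, I must exhibit one opaque set in $\Gamma_1$ admitting no more opaque set in $\Gamma_2$, and I would split into two cases. If $\mathcal{V}(\Gamma_1)=\mathcal{V}(\Gamma_2)=:\mathcal{V}$, take the maximal opaque set $\mathcal{X}_s^1=\mathbb{R}^n\backslash\mathcal{V}^{\perp}$ from Lemma~\ref{max_sys-opacity}; since every $\mathcal{V}$-coset meets $\mathcal{V}^{\perp}$ in exactly one point, adding any $w\in\mathcal{V}^{\perp}$ fills the coset $w\bigoplus\mathcal{V}$ and destroys opacity, so no strict opaque superset exists. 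If instead $\mathcal{V}(\Gamma_1)\not\subseteq\mathcal{V}(\Gamma_2)$, take $\mathcal{X}_s^1=\mathcal{V}(\Gamma_2)$: it is opaque in $\Gamma_1$ (no $\mathcal{V}(\Gamma_1)$-coset fits inside $\mathcal{V}(\Gamma_2)$, precisely because $\mathcal{V}(\Gamma_1)\not\subseteq\mathcal{V}(\Gamma_2)$) yet it is an entire $\mathcal{V}(\Gamma_2)$-coset and hence not opaque in $\Gamma_2$; since every superset still contains this fully-secret coset, none is opaque in $\Gamma_2$.

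I expect the main obstacle to be the ``only if'' direction, and specifically the temptation to use the maximal opaque set of $\Gamma_1$ as a universal witness. This is inadequate: one can have $\mathcal{V}(\Gamma_1)\not\subseteq\mathcal{V}(\Gamma_2)$ yet still enlarge $\mathbb{R}^n\backslash\mathcal{V}(\Gamma_1)^{\perp}$ into a larger $\Gamma_2$-opaque set, so the witness must be chosen more carefully — the clean choice $\mathcal{X}_s^1=\mathcal{V}(\Gamma_2)$ is the crux, as it is opaque in $\Gamma_1$ but is itself a full $\mathcal{V}(\Gamma_2)$-coset. A secondary care point is the bookkeeping of strict versus non-strict containments (ensuring $\mathcal{X}_s^1\neq\mathbb{R}^n$ so that a non-secret point exists, and that the constructed $\mathcal{X}_s^2$ is a proper superset), all of which follow routinely once the coset characterization is in place.
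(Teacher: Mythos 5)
Your proof is correct and, for the ``if'' direction and for the nested case of the ``only if'' direction, it is essentially the paper's own argument rephrased in the cleaner language of cosets: the new secret point is a non-secret point $v$ whose $\mathcal{V}(\Gamma_2)$-coset splits into at least two $\mathcal{V}(\Gamma_1)$-cosets, each of which already carries a non-secret point by opacity in $\Gamma_1$, and the witness $\mathbb{R}^n\backslash\mathcal{V}(\Gamma_1)^{\perp}$ for $\mathcal{V}(\Gamma_2)\subseteq\mathcal{V}(\Gamma_1)$ is exactly the paper's choice from Lemma~\ref{max_sys-opacity}. Where you genuinely depart from the paper is in the remaining case of the contrapositive: the paper negates $\mathcal{V}(\Gamma_1)\subset\mathcal{V}(\Gamma_2)$ as ``$\mathcal{V}(\Gamma_1)\supseteq\mathcal{V}(\Gamma_2)$'' and therefore never treats incomparable subspaces. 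Your warning that the maximal opaque set of $\Gamma_1$ is not a universal witness there is accurate; for instance, in $\mathbb{R}^3$ with $\mathcal{V}(\Gamma_1)=\mathrm{span}\{e_1\}$ and $\mathcal{V}(\Gamma_2)=\mathrm{span}\{e_1+e_2,\,e_3\}$, the set $\mathbb{R}^3\backslash\mathrm{span}\{e_2,e_3\}$ can be enlarged by the origin and remain opaque in $\Gamma_2$, since every $\mathcal{V}(\Gamma_2)$-coset meets $\mathrm{span}\{e_2,e_3\}$ in a whole line. Your alternative witness $\mathcal{X}_s^1=\mathcal{V}(\Gamma_2)$ --- opaque in $\Gamma_1$ precisely because $\mathcal{V}(\Gamma_1)\not\subseteq\mathcal{V}(\Gamma_2)$ forces every $\mathcal{V}(\Gamma_1)$-coset to leave $\mathcal{V}(\Gamma_2)$, yet itself a full $\mathcal{V}(\Gamma_2)$-coset so that no superset can be opaque in $\Gamma_2$ --- closes that gap. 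In short, your route buys a complete case analysis of the negation at essentially no extra cost, whereas the paper's version, as written, establishes the ``only if'' direction only when the two weakly unobservable subspaces are comparable.
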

\begin{proof}
Refer Appendix.
\end{proof}

Theorem \ref{thm:opacity_WUS} implies that expanding the subspace $\mathcal{V}(\Gamma)$ (by modifying the system matrices $A,B,C,D$) allows us to increase the size of any opaque set. This again highlights the fundamental connection between opacity and WUS. Next, we present an example to explain the results of this section.

\begin{example}\label{example1}
Consider the following system:
\begin{align*}
    x(k+1)&=\begin{bmatrix}
1 & 1\\
0 & 1
\end{bmatrix}
x(k)+\begin{bmatrix}
1 & 1\\
1 & 0
\end{bmatrix}u(k),\\
y(k)&=\begin{bmatrix}
1 & 0
\end{bmatrix}x(k).
\end{align*}
For this system, $\mathcal{V}(\Gamma)=\text{span}\{[0,1]^T\}$.

\noindent (i) We first consider opacity of individual initial states. Let $x_s(0)=[1,1]^T$ and $x_{ns}(0)=[1,0]^T$. Then, $x_s(0)-x_{ns}(0)=[0,1]^T\in\mathcal{V}(\Gamma)$. Thus, by Lemma \ref{Opaque_Condition}, $x_s(0)\xrightarrow{\text{o}} \{x_{ns}(0)\}$. We show this explicitly for $k=2$.
From opacity Definition \ref{def:opacity_state} , we have that for all $U_s(2)$, there should exist some $U_{ns}(2)$ such that  $Y_{x_s(0),U_s(2)}=Y_{x_{ns}(0),U_{ns}(2)}$. Using \eqref{nop-seq}, this is equivalent to $O_2x_s(0)+F_2^{\Gamma}U_s(2)=O_2x_{ns}(0)+F_2^{\Gamma}U_{ns}(2)$. Substituting $O_2,F_2^{\Gamma},x_{s}(0),x_{ns}(0)$ and rearranging, we get the following linear equation:\par 
{\small\begin{align*}
\underbrace{\begin{bmatrix}
0\\
1\\
2
\end{bmatrix}}_{O_2(x_s(0)-x_{ns}(0))}+\underbrace{\begin{bmatrix}
0 & 0 & 0 & 0 & 0 & 0\\
1 & 1 & 0 & 0 & 0 & 0\\
2 & 1 & 1 & 1 & 0 & 0
\end{bmatrix}}_{F_2^{\Gamma}}U_s(2) = F_2^{\Gamma}U_{ns}(2).
\end{align*}}
Since $O_2(x_s(0)-x_{ns}(0))\in\mathcal{R}(F_2^\Gamma)$, we see that for any $U_s(2)$ there exists $U_{ns}(2)$ that solves this equation.
For instance, both $U_s(2)=[1,1,1,1,1,1]^T$ and $U_{ns}(2)=[1,2,1,2,1,2]^T$ with corresponding initial states result in the output sequence $[1,4,8]^T$. 
%Since all $U_s(2)$ has a corresponding $U_{ns}(2)$ for which $\Gamma$ produces the same output, $x_s(0)$ cannot be estimated solely from the output sequences until $k=2$. Hence, opacity of $x_s(0)$ is seen until $k=2$.\par 
Further, since $x_s(0)\bigoplus\mathcal{V}(\Gamma)=x_{ns}(0)\bigoplus\mathcal{V}(\Gamma)=\{[1,c]^T:c\in\mathbb{R}\}$, it holds that $x_{ns}(0)\in x_s(0)\bigoplus\mathcal{V}(\Gamma)$ and $x_s(0)\in x_{ns}(0)\bigoplus\mathcal{V}(\Gamma)$. Therefore, $x_s(0)$ and $x_{ns}(0)$ satisfy Corollary \ref{Corr_Opaque_State}.\par
\noindent (ii) Next, we focus on opacity of sets. Let  $\mathcal{X}_0=\{x\in\mathbb{R}^2:||x||_{\infty}=1\}$ and let $\mathcal{X}_{ns}=\{[y,0]^T:y
\in [-1,1]\}$. We note that $(\mathcal{X}_{ns}\bigoplus\mathcal{V}(\Gamma))=\{[c,d]^T:c\in[-1,1],d\in\mathbb{R}\}$. Therefore, as seen in Fig. \ref{fig:Example1}, any $x_s(0)\in\mathcal{X}_s = \mathcal{X}_0\backslash\mathcal{X}_{ns}$ belongs to $\mathcal{X}_{ns}\bigoplus\mathcal{V}(\Gamma)$. Thus, $\mathcal{X}_s\subset \mathcal{X}_{ns}\bigoplus\mathcal{V}(\Gamma)$ and $\mathcal{X}_s\xrightarrow{\text{s.o.}}\mathcal{X}_{ns}$ as per Lemma \ref{SISO_Condition}.
\begin{figure}[htp]
    \centering
    \includegraphics[width=5cm]{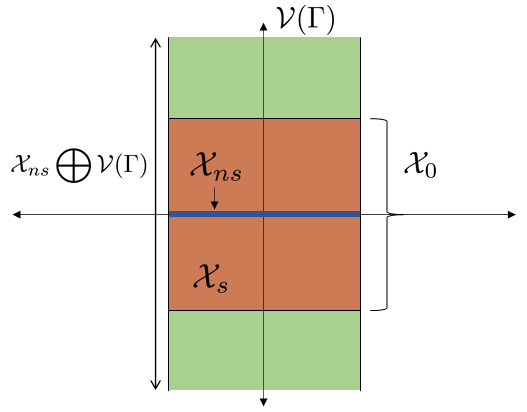}
    \caption{Pictorial representation of Example 1, part(ii). $\mathcal{X}_0$ is the brown square, $\mathcal{X}_{ns}$ is the blue line segment, $\mathcal{V}(\Gamma)$ is the $y$ axis, and $\mathcal{X}_{ns}\bigoplus\mathcal{V}(\Gamma)$ is the infinite green strip. Since $\mathcal{X}_s\subset\mathcal{X}_{ns}\bigoplus\mathcal{V}(\Gamma),$ we have $\mathcal{X}_s\xrightarrow{\text{s.o.}}\mathcal{X}_{ns}$.}
    \label{fig:Example1}
\end{figure}\par
\noindent (iii) Next, we find largest opaque set in $\Gamma$ with $\mathcal{X}_0=\mathbb{R}^2$. By Lemma \ref{max_sys-opacity}, the largest opaque set is $\mathcal{X}_s=\mathbb{R}^2\backslash\mathcal{V}(\Gamma)^\perp=\mathbb{R}^2\backslash$span$\{[1,0]^T\}$. Now, if we introduce matrix $D=[1,0]$ in the example, then $\mathcal{V}(\Gamma)$ expands to $\mathbb{R}^2$. Consequently, for this system, $\mathcal{X}_s=\mathbb{R}^2\backslash \{0\}$ is the largest opaque set, which is larger than the one for unmodified system.\hfill $\square$
\end{example}
%This section has laid a foundation for the notion of opacity and has built up different relations to the weakly unobservable subspace $\mathcal{V}(\Gamma)$. These primal results enable us to investigate a fundamental trade-off between opacity and attack detectability of a system. This we study in the next section.

\section{Opacity and Attack Detectability Trade-off} \label{sec:tradeoffs}
In this section, we use the relationship between opacity and WUS developed in Section \ref{sec:opacity} to characterize the trade-off between opacity and attack detectability. We do this in two ways by investigating the following questions:
\begin{itemize}
\item Does a system with opaque sets necessarily permit undetectable attacks? (Subsection \ref{sub:existence_tradeoff})
\item Does expanding opaque sets (by expanding $\mathcal{X}_0$) increases the set of undetectable attacks? (Subsection \ref{sub:X0_tradeoff})
\end{itemize}

\subsection{Coexistence of Opaque Sets and Undetectable Attacks}\label{sub:existence_tradeoff}
In this subsection, we show that existence of opaque sets implies existence of undetectable attacks, and vice-versa. We assume $\mathcal{X}_0=\mathbb{R}^n$  and begin the analysis by providing the condition for existence of opaque sets. 

\begin{lemma}\label{WO_iff_VGamma}
There exists a (strongly) opaque set  $\mathcal{X}_s$ for System $\Gamma$ in \eqref{normal_model} iff $\mathcal{V}(\Gamma)\neq 0$.
\end{lemma}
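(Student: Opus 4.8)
The plan is to prove the two implications separately, leaning entirely on the $\mathcal{V}(\Gamma)$-characterizations already established, and reading ``existence of a (strongly) opaque set'' as existence of a \emph{non-empty} one (the empty set being vacuously strongly opaque). I would also keep in mind throughout that a secret set $\mathcal{X}_s$ and its companion non-secret set $\mathcal{X}_{ns}^\prime$ are required to be non-empty and disjoint.

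For the reverse direction ($\mathcal{V}(\Gamma)\neq 0$ implies existence) I would give an explicit construction. Choosing any nonzero $v\in\mathcal{V}(\Gamma)$ and any point $x_{ns}(0)\in\mathbb{R}^n$, set $\mathcal{X}_{ns}^\prime=\{x_{ns}(0)\}$ and $\mathcal{X}_s=\{x_{ns}(0)+v\}$. These are non-empty and disjoint because $v\neq 0$, and $x_s(0)-x_{ns}(0)=v\in\mathcal{V}(\Gamma)$, so Lemma \ref{Opaque_Condition} yields $x_s(0)\xrightarrow{\text{o}}\{x_{ns}(0)\}$ and hence $\mathcal{X}_s\xrightarrow{\text{s.o.}}\mathcal{X}_{ns}^\prime$. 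Equivalently, one may simply invoke Lemma \ref{max_sys-opacity}: the set $\mathcal{X}_s=\mathbb{R}^n\backslash\mathcal{V}(\Gamma)^\perp$ is strongly opaque, and it is non-empty precisely when $\mathcal{V}(\Gamma)\neq 0$ forces $\mathcal{V}(\Gamma)^\perp$ to be a proper subspace of $\mathbb{R}^n$.

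For the forward direction I would argue by contrapositive: assume $\mathcal{V}(\Gamma)=0$ and show no non-empty opaque set exists. By Lemma \ref{SISO_Condition}, any strongly opaque $\mathcal{X}_s$ (relative to a disjoint $\mathcal{X}_{ns}^\prime$) must satisfy $\mathcal{X}_s\subset\mathcal{X}_{ns}^\prime\bigoplus\mathcal{V}(\Gamma)=\mathcal{X}_{ns}^\prime\bigoplus\{0\}=\mathcal{X}_{ns}^\prime$; since $\mathcal{X}_s\cap\mathcal{X}_{ns}^\prime=\phi$, this forces $\mathcal{X}_s=\phi$, a contradiction. The same conclusion holds for weak opacity via Lemma \ref{WISO_Condition}, since $(\mathcal{X}_s\bigoplus-\mathcal{X}_{ns}^\prime)\cap\{0\}\neq\phi$ would require $x_s(0)=x_{ns}(0)$ for some pair, again contradicting disjointness.

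The lemma is light on computation given the machinery in place; the only genuine subtlety, and hence the part I would state most carefully, is the bookkeeping around non-emptiness and disjointness. Specifically, because the empty set trivially satisfies strong opacity, the statement must be interpreted as asserting a non-empty opaque set, and the contrapositive argument succeeds precisely because disjointness of $\mathcal{X}_s$ and $\mathcal{X}_{ns}^\prime$ rules out the degenerate equality $x_s(0)=x_{ns}(0)$ that $\mathcal{V}(\Gamma)=0$ would otherwise demand.
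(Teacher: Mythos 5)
Your proof is correct and follows essentially the same route as the paper's: both directions reduce to Lemma \ref{Opaque_Condition} (directly or via Lemma \ref{SISO_Condition}), with disjointness of $\mathcal{X}_s$ and $\mathcal{X}_{ns}^\prime$ forcing the difference vector to be nonzero in one direction, and an explicit translate-by-$v\in\mathcal{V}(\Gamma)$ construction (your singleton is an instance of the paper's general construction) in the other. Your explicit attention to the non-emptiness convention is a minor presentational refinement, not a different argument.
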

\begin{proof}
Only if: If there exist sets $\mathcal{X}_s,\mathcal{X}_{ns}$ such that $\mathcal{X}_s\xrightarrow{\text{s.o.}}\mathcal{X}_{ns}$, then there exists a pair $(x_s(0), x_{ns}(0))$ for which $x_s(0)\xrightarrow{\text{o}}\{x_{ns}(0)\}$. Due to Lemma \ref{Opaque_Condition}, $x_s(0)-x_{ns}(0)\in\mathcal{V}(\Gamma)$. Since $\mathcal{X}_s$ and $\mathcal{X}_{ns}$ are disjoint, $x_s(0)\neq x_{ns}(0)$ and thus, $\mathcal{V}(\Gamma)\neq 0$. \par
If: Let $\mathcal{V}(\Gamma)\neq 0$. Therefore, there exists $x(0)\neq 0$ and $U(k)$ such that $O_k x(0)+F_k^\Gamma U(k)=0$. Therefore we can construct disjoint sets $\mathcal{X}_s$ and $\mathcal{X}_{ns}$ such that for all $x_s(0)\in\mathcal{X}_s$, corresponding $x_{ns}(0)\in\mathcal{X}_{ns}$ is such that $x_s(0)-x_{ns}(0)=x(0)\in\mathcal{V}(\Gamma)$. Due to this, using Lemma \ref{Opaque_Condition}, $x_s(0)\xrightarrow{\text{o}}\{x_{ns}(0)\}$ for all $x_s(0)\in\mathcal{X}_s$. Consequently, $\mathcal{X}_s\xrightarrow{\text{s.o.}}\mathcal{X}_{ns}$.
\end{proof}
The following theorem shows that one cannot construct a system with opaque sets without permitting undetectable attacks.
\begin{theorem}\label{WO_iff_Attack}
The following two statements are equivalent:\par
\noindent 1. There exists a (strongly) opaque set $\mathcal{X}_s$ for $\Gamma$.\par
\noindent 2. There exists a $\tilde{\Gamma}$ (that is, a pair  $(\tilde{B},\tilde{D}$)), for which:\par
a. There exists an attack sequence $\tilde{U}_u(k)\neq 0$ that is undetectable for any $k\geq 0$,\par
b. $\mathcal{R}([\tilde{B}^{T},\tilde{D}^{T}]^T)\subseteq \mathcal{R}([B^T,D^T]^T)$. \par 
\end{theorem}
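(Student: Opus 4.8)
The plan is to route both statements through the weakly unobservable subspace. By Lemma~\ref{WO_iff_VGamma}, statement~1 is equivalent to $\mathcal{V}(\Gamma)\neq 0$, so it suffices to prove that statement~2 is also equivalent to $\mathcal{V}(\Gamma)\neq 0$. The bridge is a geometric observation that I would isolate as a preliminary claim: the conditions defining $\mathcal{V}(\tilde{\Gamma})$ constrain the input, at each step $j$, only through the vector $[\tilde{B}^T,\tilde{D}^T]^T\tilde{u}(j)$ appearing in the update $\tilde{x}(j{+}1)=A\tilde{x}(j)+\tilde{B}\tilde{u}(j)$ and in the zero-output requirement $C\tilde{x}(j)+\tilde{D}\tilde{u}(j)=0$. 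Consequently $\mathcal{V}(\tilde{\Gamma})$ depends on $(\tilde{B},\tilde{D})$ only through the subspace $\mathcal{R}([\tilde{B}^T,\tilde{D}^T]^T)$: if $\mathcal{R}([\tilde{B}^T,\tilde{D}^T]^T)\subseteq \mathcal{R}([B^T,D^T]^T)$ then $\mathcal{V}(\tilde{\Gamma})\subseteq \mathcal{V}(\Gamma)$, and equality of the two ranges gives $\mathcal{V}(\tilde{\Gamma})=\mathcal{V}(\Gamma)$. I would prove this by taking a zero-output trajectory of $\tilde{\Gamma}$ from $z$ and replacing each $\tilde{u}(j)$ by a $u(j)$ with $[B^T,D^T]^T u(j)=[\tilde{B}^T,\tilde{D}^T]^T\tilde{u}(j)$, which reproduces the identical state and (zero) output trajectory in $\Gamma$, placing $z\in\mathcal{V}(\Gamma)$.

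For statement~1 $\Rightarrow$ statement~2, assume $\mathcal{V}(\Gamma)\neq 0$; then $[B^T,D^T]^T\neq 0$, since otherwise $\mathcal{V}(\Gamma)$ collapses to the unobservable subspace, which is $0$ by the standing observability assumption. I would take the columns of $[\tilde{B}^T,\tilde{D}^T]^T$ to be a basis of $\mathcal{R}([B^T,D^T]^T)$; this matrix is full column rank and satisfies 2b with equality of ranges, so the preliminary claim gives $\mathcal{V}(\tilde{\Gamma})=\mathcal{V}(\Gamma)\neq 0$. Choosing $0\neq z\in\mathcal{V}(\tilde{\Gamma})$ and an associated $\tilde{U}_u(k)$ with $O_k z+F_k^{\tilde{\Gamma}}\tilde{U}_u(k)=0$ yields, via Definition~\ref{def:attack_det} with $x(0)=z$ and $x^\prime(0)=0$, an undetectable attack for all $k\geq 0$. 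It is nonzero, since $\tilde{U}_u(k)=0$ would force $O_k z=0$ for all $k$ and hence $z=0$ by observability, establishing 2a.

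For statement~2 $\Rightarrow$ statement~1, I would first deduce $\mathcal{V}(\tilde{\Gamma})\neq 0$ from 2a. By Definition~\ref{def:attack_det} there is $z=x(0)-x^\prime(0)$ with $O_k z+F_k^{\tilde{\Gamma}}\tilde{U}_u(k)=0$; if $z\neq 0$ then $z\in\mathcal{V}(\tilde{\Gamma})$ and we are done. If $z=0$, the input drives the zero-output trajectory from $\tilde{x}(0)=0$. Let $j^\star$ be the first index with $\tilde{x}(j^\star{+}1)\neq 0$; since all earlier states vanish, $\tilde{x}(j^\star{+}1)=\tilde{B}\tilde{u}_u(j^\star)$. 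Such an index exists, for otherwise $\tilde{x}(j)=0$ for every $j$, forcing $\tilde{B}\tilde{u}_u(j)=0$ and $\tilde{D}\tilde{u}_u(j)=0$ at each step, hence $[\tilde{B}^T,\tilde{D}^T]^T\tilde{u}_u(j)=0$ and $\tilde{U}_u(k)=0$ by full column rank, a contradiction. The tail trajectory from $\tilde{x}(j^\star{+}1)$ has zero output, so $\tilde{x}(j^\star{+}1)$ is a nonzero element of $\mathcal{V}(\tilde{\Gamma})$. With $\mathcal{V}(\tilde{\Gamma})\neq 0$ in hand, condition 2b and the preliminary claim give $\mathcal{V}(\Gamma)\supseteq\mathcal{V}(\tilde{\Gamma})\neq 0$, and Lemma~\ref{WO_iff_VGamma} then furnishes a strongly opaque set, which is statement~1.

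The step I expect to be the main obstacle is the preliminary bridging claim, namely showing that the weakly unobservable subspace is governed solely by $A$, $C$, and the range of the stacked input matrix, and that range containment transfers to subspace containment. The secondary delicate point is the $z=0$ bookkeeping in the converse, where the full-column-rank assumption on $[\tilde{B}^T,\tilde{D}^T]^T$ together with observability are exactly what guarantee that the extracted undetectable attack and the associated WUS vector are genuinely nonzero.
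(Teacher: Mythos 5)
Your proposal is correct and follows essentially the same route as the paper: both reduce statement~1 to $\mathcal{V}(\Gamma)\neq 0$ via Lemma~\ref{WO_iff_VGamma}, extract a nonzero WUS vector from an undetectable attack by splitting on whether the initial-state difference vanishes (using the full column rank of $[\tilde{B}^T,\tilde{D}^T]^T$ in the degenerate case), and transfer nonemptiness between $\mathcal{V}(\tilde{\Gamma})$ and $\mathcal{V}(\Gamma)$ via monotonicity of the WUS in $\mathcal{R}([\tilde{B}^T,\tilde{D}^T]^T)$. The only differences are cosmetic: you prove that monotonicity directly by trajectory substitution where the paper invokes the recursive WUS algorithm from \cite{Book_2002_ControlTheory}, and you take the columns of $[\tilde{B}^T,\tilde{D}^T]^T$ to be a basis of $\mathcal{R}([B^T,D^T]^T)$ where the paper simply sets $\tilde{\Gamma}=\Gamma$ (your choice is in fact slightly cleaner, since it keeps the standing full-column-rank assumption on $[\tilde{B}^T,\tilde{D}^T]^T$ satisfied even when $[B^T,D^T]^T$ is rank-deficient).
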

\begin{proof}
Refer Appendix.
\end{proof}
Theorem \ref{WO_iff_Attack} shows that one cannot have opacity in the system without making it inevitably vulnerable to undetectable attacks. Moreover, if all attacks are detectable for all ($\tilde{B},\tilde{D}$), then no opaque set can exist in the system. This implies that a fundamental trade-off exists between opacity and attack detectability for linear systems. Next, we characterize a set of attacked systems $\tilde{\Gamma}$ in Theorem \ref{WO_iff_Attack} for which undetectable attacks exist.

\begin{lemma}\label{Gamma_set}
The following two statements hold:\\
1. Suppose there exists a (strongly) opaque set for system $\Gamma$. Then, there exists $\tilde{U}_u(k)\neq 0$ undetectable for all $k\geq 0$ for a system $\tilde{\Gamma}$ if $\mathcal{R}([\tilde{B}^{T},\tilde{D}^{T}]^T) \supseteq \mathcal{R}([B^T,D^T]^T)$.

\noindent 2. Suppose there exists $\tilde{U}_u(k)\neq 0$ undetectable for all $k\geq 0$ for a system $\tilde{\Gamma}$. Then, there exists an opaque set for a system $\Gamma$ if $ \mathcal{R}([B^T,D^T]^T)  \supseteq \mathcal{R}([\tilde{B}^{T},\tilde{D}^{T}]^T) $.

% Given a system $\Gamma$ for which there exists a (strongly) opaque set,  there exist then for all $\tilde{\Gamma}$ for which $\mathcal{R}([B^T,D^T]^T)\subseteq \mathcal{R}([\tilde{B}^{T},\tilde{D}^{T}]^T)$ there exists $\tilde{U}_u(k)\neq 0$ undetectable for all $k\geq 0$ .\\
% 2. Given $\tilde{\Gamma}$ for which there exists $\tilde{U}_u(k)\neq 0$ undetectable for all $k\geq 0$, then for all $\Gamma$ for which $\mathcal{R}([\tilde{B}^{T},\tilde{D}^{T}]^T\subseteq \mathcal{R}([B^T,D^T]^T)$ there exists (strongly) opaque set.
\end{lemma}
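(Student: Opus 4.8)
The plan is to reduce both statements to a single monotonicity property of the weakly unobservable subspace and then invoke the two bridges already available: the opacity--WUS equivalence of Lemma~\ref{WO_iff_VGamma}, and the attack--WUS connection recorded after the WUS definition. Writing $M=[B^T,D^T]^T$ and $\tilde M=[\tilde B^T,\tilde D^T]^T$, the central claim I would establish first is the monotonicity lemma: if $\mathcal{R}(\tilde M)\supseteq\mathcal{R}(M)$, then $\mathcal{V}(\Gamma)\subseteq\mathcal{V}(\tilde\Gamma)$. Range inclusion means there is a matrix $T$ with $B=\tilde B T$ and $D=\tilde D T$. Substituting into the block entries $D$ and $CA^{j-1}B$ of $F_k^{\Gamma}$ and comparing with the corresponding blocks $\tilde D$ and $CA^{j-1}\tilde B$ of $F_k^{\tilde\Gamma}$ yields the factorization $F_k^{\Gamma}=F_k^{\tilde\Gamma}(I_{k+1}\otimes T)$, i.e. a block-diagonal right multiplication by $T$ that respects the lower-triangular Toeplitz structure. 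This immediately gives $\mathcal{R}(F_k^{\Gamma})\subseteq\mathcal{R}(F_k^{\tilde\Gamma})$. Using the characterization $x\in\mathcal{V}(\Gamma)\Leftrightarrow O_{n-1}x\in\mathcal{R}(F_{n-1}^{\Gamma})$ that follows directly from the (Cayley--Hamilton) form of the WUS definition, any $x\in\mathcal{V}(\Gamma)$ satisfies $O_{n-1}x\in\mathcal{R}(F_{n-1}^{\Gamma})\subseteq\mathcal{R}(F_{n-1}^{\tilde\Gamma})$ and hence $x\in\mathcal{V}(\tilde\Gamma)$.

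With monotonicity in hand, Statement~1 follows quickly. Existence of a strongly opaque set for $\Gamma$ gives $\mathcal{V}(\Gamma)\neq 0$ by Lemma~\ref{WO_iff_VGamma}; the hypothesis $\mathcal{R}(\tilde M)\supseteq\mathcal{R}(M)$ together with monotonicity gives $\mathcal{V}(\tilde\Gamma)\supseteq\mathcal{V}(\Gamma)\neq 0$; and $\mathcal{V}(\tilde\Gamma)\neq 0$ yields an attack that is undetectable for all $k$ by the result cited after the WUS definition. This attack is nonzero because observability of $\Gamma$ forbids a nonzero WUS vector being driven to zero output by the zero input (that would force $O_{n-1}x=0$ with $x\neq 0$).

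Statement~2 is the mirror image, but it needs the converse attack bridge: a nonzero undetectable attack for $\tilde\Gamma$ implies $\mathcal{V}(\tilde\Gamma)\neq 0$. I would write the attack condition as $O_{n-1}x_d+F_{n-1}^{\tilde\Gamma}\tilde U_u=0$ with $x_d=x(0)-x'(0)$. If $x_d\neq 0$ then $x_d\in\mathcal{V}(\tilde\Gamma)$ and we are done. The delicate case is $x_d=0$, i.e. $F_{n-1}^{\tilde\Gamma}\tilde U_u=0$ with $\tilde U_u\neq 0$: taking the first nonzero block $\tilde u(j)$, the state at time $j$ is zero so $\tilde D\tilde u(j)=0$, and the reached state $\tilde B\tilde u(j)$ is nonzero precisely because full column rank of $\tilde M$ together with $\tilde D\tilde u(j)=0$ forces $\tilde B\tilde u(j)\neq 0$; this state lies in $\mathcal{V}(\tilde\Gamma)$. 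Then monotonicity with the roles of $\Gamma$ and $\tilde\Gamma$ swapped gives $\mathcal{V}(\Gamma)\supseteq\mathcal{V}(\tilde\Gamma)\neq 0$, and an opaque set exists by Lemma~\ref{WO_iff_VGamma}.

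I expect the main obstacle to be the monotonicity lemma itself, specifically the bookkeeping that shows the Toeplitz-structured $F_k^{\Gamma}$ inherits the range inclusion from the factorization $M=\tilde M T$, and the $x_d=0$ subcase of the converse attack bridge, which is exactly where the full-column-rank assumption on $[\tilde B^T,\tilde D^T]^T$ is essential. The opacity bridges themselves are immediate consequences of Lemmas~\ref{Opaque_Condition} and \ref{WO_iff_VGamma}, so they should require no additional work.
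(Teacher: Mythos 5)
Your proposal is correct and reaches both conclusions, but the central step---monotonicity of the weakly unobservable subspace under range inclusion of the input/feedthrough matrices---is obtained by a genuinely different mechanism than in the paper. The paper invokes the recursive subspace algorithm $\mathcal{V}_0=\mathcal{X}_0$, $[A^T,C^T]^T\mathcal{V}_{k+1}=(\mathcal{V}_k\times 0)+\mathcal{R}([B^T,D^T]^T)$ from \cite{Book_2002_ControlTheory} and reads off $\mathcal{V}(\Gamma)\subseteq\mathcal{V}(\tilde{\Gamma})$ from the monotone dependence of each iterate on $\mathcal{R}([B^T,D^T]^T)$; you instead factor $[B^T,D^T]^T=[\tilde{B}^T,\tilde{D}^T]^T T$, propagate this through the Toeplitz structure to $F_k^{\Gamma}=F_k^{\tilde{\Gamma}}(I_{k+1}\otimes T)$, and conclude via the finite-horizon characterization $x\in\mathcal{V}(\Gamma)\Leftrightarrow O_{n-1}x\in\mathcal{R}(F_{n-1}^{\Gamma})$, which is immediate from the Cayley--Hamilton form of the WUS definition already stated in the paper (note $O_k$ is shared by $\Gamma$ and $\tilde{\Gamma}$ since $A,C$ are unchanged). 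Your route is more self-contained, needing nothing beyond the paper's own definitions, at the cost of some block-matrix bookkeeping; the paper's is shorter but leans on an external algorithm. You are also more explicit on two points the paper compresses: the nonvanishing of the undetectable attack (via observability of $\tilde{\Gamma}$), and, for Statement 2, the converse bridge from a nonzero undetectable attack to $\mathcal{V}(\tilde{\Gamma})\neq 0$, including the $x(0)-x'(0)=0$ subcase where full column rank of $[\tilde{B}^T,\tilde{D}^T]^T$ is essential---the paper dismisses Statement 2 as ``similar to Statement 1'' and houses that subcase only in the proof of Theorem \ref{WO_iff_Attack}. Both arguments are sound.
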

\begin{proof}
    $\mathcal{V}(\Gamma)$ can be generated recursively as \cite{Book_2002_ControlTheory}: $\mathcal{V}_0=\mathcal{X}_0, \left(\begin{array}{l}
A \\
C
\end{array}\right)\mathcal{V}_{k+1}=\left[\left(\mathcal{V}_{k} \times 0\right)+\mathcal{R}\left(\begin{array}{l}
B \\
D
\end{array}\right)\right]$, where $\mathcal{V}_{k} \times 0$ denotes the cartesian product of subspaces $\mathcal{V}_k$ and 0.\par
Statement 1: If there exists (strongly) opaque set for $\Gamma$, then by Lemma \ref{WO_iff_VGamma}, $\mathcal{V}(\Gamma)\neq 0$. Now, for all $\tilde{\Gamma}$ for which $\mathcal{R}([B^T,D^T]^T)\subseteq \mathcal{R}([\tilde{B}^{T},\tilde{D}^{T}]^T)$, we see from the algorithm that $\mathcal{V}(\Gamma)\subseteq\mathcal{V}(\tilde{\Gamma})$.  Consequently, $\mathcal{V}(\tilde{\Gamma})\neq 0$ and thus, there exist $\tilde{U}_u(k)\neq 0\:\forall\: k\geq 0$ for all such $\tilde{\Gamma}$.\par
Statement 2: Proof is similar to proof for Statement 1.
\end{proof}
The above results elucidate that opaque sets and undetectable attacks co-exist in linear systems under certain conditions. This is further illustrated in the following example where the system is modified to eliminate undetectable attacks for a certain set of attacked systems $\tilde{\Gamma}$.\\~\par
\noindent\textbf{Example 1. (Continued)}
For the system in Example 1, since $\mathcal{V}(\Gamma)=\text{span}\{[0,1]^T\}$ contains elements other than the origin, there exist strongly opaque sets in $\Gamma$ (c.f. Lemma \ref{WO_iff_VGamma}). Examples of such sets were shown previously. Further, if we consider $\tilde{\Gamma}=\Gamma$, there exists an attack $\tilde{U}_u(k)\neq 0$ on the system starting from an initial condition $0\neq x(0)\in\mathcal{V}(\tilde{\Gamma})=\text{span}\{[0,1]^T\}$ that is undetectable for any $k\geq 0$. For instance, the attack  $\tilde{U}_u(2)=[0,-1,0,-1,0,-1]^T$ is undetectable until $k=2$.\par
Next, we modify the system in order to eliminate undetectable attacks. The modified system has the output dynamics:
\begin{align*}
    y(k)&=\begin{bmatrix}
1 & 0\\
0 & 1
\end{bmatrix}x(k).
\end{align*}
We argue that all attacks in the modified system are detectable. 
The modified system has $\mathcal{V}(\Gamma)=0$. If we consider a $\tilde{\Gamma}$ such that  $\mathcal{R}([\tilde{B}^{T},\tilde{D}^{T}]^T)\subseteq \mathcal{R}([B^T,D^T]^T)$, then  $\mathcal{V}(\tilde{\Gamma})=\mathcal{V}(\Gamma)=0$ (based on recursive algorithm stated in the proof of Lemma \ref{Gamma_set}). By Definition \ref{def:attack_det} undetectable attacks $\tilde{U}(k)$ should satisfy $\tilde{Y}_{x(0)-x^\prime (0),\tilde{U}(k)}=0$ for any two initial states $x(0)$ and $x^\prime(0)$. This is equivalent to $\tilde{Y}_{z(0),\tilde{U}(k)} = 0$ for some initial state $z(0)$, which yields $O_kz(0)+F^{\tilde{\Gamma}}_k\tilde{U}(k)=0.$ Since $\mathcal{V}(\tilde{\Gamma})=0$, the only initial condition for which the above equation would hold is $z(0)=0$. Thus, an undetectable attack should result in zero output sequence with zero initial condition. Since $[\tilde{B}^T,\tilde{D}^T]^T$ is full column rank and $\tilde{D}=0$, we have $\mathcal{N}(\tilde{B})=0$. Hence, if for some $k_0, \tilde{u}(k_0)\neq 0$, then $\tilde{x}(k_0+1)\neq 0$ and $\tilde{y}(k_0+1)\neq 0$, which implies that such an attack is detectable at $k_0+1$. Thus, all attacks are detectable. 
Moreover, since $\mathcal{V}(\Gamma)=0$, no opaque set exists (c.f. Lemma \ref{WO_iff_VGamma}). Therefore, we observe that eliminating undetectable attacks also eliminates opaque sets, indicating the trade-off between the two.

\subsection{Relation between Sizes of Opaque and  Undetectable Attacks Set}\label{sub:X0_tradeoff}
We examine the effect of expanding the opaque set on the size of undetectable attack set, and vice versa, and show that there exists a trade-off between the two. The variations in these sets is achieved by changing the initial state set $\mathcal{X}_0$.\footnote{Note that expanding $\mathcal{X}_s$ without changing $\mathcal{X}_0$ does not affect the undetectable attack set.} An expansion of $\mathcal{X}_0$ may be performed by the operator, for instance, to include larger set of opaque secret states.  
We assume $\mathcal{X}_0\subset\mathbb{R}^n$.\par
To assess this trade-off, we first characterize the set of undetectable attacks in terms of initial state set $\mathcal{X}_0$.
\begin{lemma}\label{set_undetectableattacks}
For all $k\geq 0$, the set of undetectable attacks $\tilde{\mathcal{U}}_u(k)$ is given by the set $\{\tilde{U}_u(k):F_k^{\tilde{\Gamma}}\tilde{U}_u(k)\in O_k (\mathcal{X}_0\bigoplus-\mathcal{X}_0)\}$.
\end{lemma}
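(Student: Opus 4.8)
The plan is to prove the set equality by unwinding Definition~\ref{def:attack_det} and substituting the closed-form expression \eqref{aop-seq} for the attacked output sequence. The argument is a direct chain of equivalences, so I would establish both set inclusions simultaneously by showing that membership in each set reduces to the same algebraic condition on $\tilde{U}_u(k)$.

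First I would fix $k\geq 0$ and start from the defining property: $\tilde{U}_u(k)\in\tilde{\mathcal{U}}_u(k)$ iff there exist $x(0),x^\prime(0)\in\mathcal{X}_0$ with $\tilde{Y}_{x(0),\tilde{U}_u(k)}=\tilde{Y}_{x^\prime(0),0}$. Substituting \eqref{aop-seq} into both sides (with zero attack input on the right) gives $O_k x(0)+F_k^{\tilde{\Gamma}}\tilde{U}_u(k)=O_k x^\prime(0)$, and isolating the forced-response term yields $F_k^{\tilde{\Gamma}}\tilde{U}_u(k)=O_k\bigl(x^\prime(0)-x(0)\bigr)$.

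Next I would translate the existential quantifier over the pair $(x(0),x^\prime(0))$ into a single membership statement. The key observation is that as $x(0)$ and $x^\prime(0)$ range independently over $\mathcal{X}_0$, the difference $x^\prime(0)-x(0)$ ranges over exactly the set $\mathcal{X}_0\bigoplus-\mathcal{X}_0$, directly from the definition of the Minkowski sum. Hence the existence of a suitable pair is equivalent to the condition $F_k^{\tilde{\Gamma}}\tilde{U}_u(k)\in O_k(\mathcal{X}_0\bigoplus-\mathcal{X}_0)$, using the image notation $O_k\mathcal{S}=\{O_k s:s\in\mathcal{S}\}$. This delivers the claimed characterization for each $k\geq 0$.

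Since every step is an equivalence rather than a one-way implication, both inclusions follow at once and no separate converse is needed. I do not anticipate a genuine obstacle here; the only point demanding care is the quantifier-to-membership translation, where I must verify that the existential over the independent pair $(x(0),x^\prime(0))$ corresponds precisely to the set $\mathcal{X}_0\bigoplus-\mathcal{X}_0$ (so that no spurious or missing differences arise), and that this argument holds uniformly in $k$ since $\mathcal{X}_0$ is independent of $k$.
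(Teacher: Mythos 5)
Your proposal is correct and follows essentially the same route as the paper's proof: unwind Definition~\ref{def:attack_det}, substitute \eqref{aop-seq} to get $F_k^{\tilde{\Gamma}}\tilde{U}_u(k)=O_k(x^\prime(0)-x(0))$, and identify the range of the difference with $\mathcal{X}_0\bigoplus-\mathcal{X}_0$. If anything, your version is slightly more careful than the paper's, since it makes explicit that every step is an equivalence (the paper only writes the forward direction and gestures at ``all combinations'' for the converse).
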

\begin{proof}
From definition of undetectable attacks, we have for any undetectable attack $\tilde{U}_u(k)$ with $k\geq 0$, there exists $x(0),x^\prime (0)\in\mathcal{X}_0$ such that $O_kx(0)+F_k^{\tilde{\Gamma}}\tilde{U}_u(k)=O_kx^\prime (0)$. Therefore, $F_k^{\tilde{\Gamma}}\tilde{U}_u(k)=O_k(x^\prime (0)-x(0))$. Taking into consideration all combinations of $x^\prime(0)$ and $x(0)$, we have that $F_k^{\tilde{\Gamma}}\tilde{U}(k)\in O_k (\mathcal{X}_0\bigoplus-\mathcal{X}_0)$.
\end{proof}
The above lemma shows that $\tilde{\mathcal{U}}_u(k)$ depends on $\mathcal{X}_0$. Using this fact and the definition of opacity, the following theorem describes the trade-off when $\mathcal{X}_0$ is expanded.
\begin{theorem}\label{initialset_tradeoff1}
Consider initial state sets $\mathcal{X}_0^1\subset\mathcal{X}_0^2\subseteq\mathbb{R}^n$. Let $\tilde{\mathcal{U}}_{u}(k)^{\mathcal{X}_0}$ denote the set of undetectable attacks on a system with initial state set $\mathcal{X}_0$. Then the following statements hold true:\\
1. For any (strongly) opaque set $\mathcal{X}_s^1\subset\mathcal{X}_0^1$, there exists a  (strongly) opaque set $\mathcal{X}_s^2\subset\mathcal{X}_0^2$ such that $\mathcal{X}_s^1\subseteq\mathcal{X}_s^2$. Further, $\mathcal{X}_s^1\subset\mathcal{X}_s^2$ iff there exists $x(0)\in\mathcal{X}_0^2\backslash\mathcal{X}_0^1$ that satisfies $(x(0)\bigoplus\mathcal{V}(\Gamma))\cap\mathcal{X}_0^2\neq \{x(0)\}$.\\
2. The set of undetectable attacks are related as $\tilde{\mathcal{U}}_{u}(k)^{\mathcal{X}_0^1}\subseteq\tilde{\mathcal{U}}_{u}(k)^{\mathcal{X}_0^2}\:\forall k\geq 0$. Further, for any $k\geq 0$, $\tilde{\mathcal{U}}_{u}(k)^{\mathcal{X}_0^1}\subset\tilde{\mathcal{U}}_{u}(k)^{\mathcal{X}_0^2}$ iff there exists $z(0)\in(\mathcal{X}_0^2\bigoplus-\mathcal{X}_0^2)$ that satisfies $-O_kz(0)\in F_k^{\tilde{\Gamma}}(\mathbb{R}^{(k+1)q}\backslash\tilde{\mathcal{U}}_u(k)^{\mathcal{X}_0^1})$. For special case of $\tilde{D}$ being square and full rank, we always have $\tilde{\mathcal{U}}_{u}(k)^{\mathcal{X}_0^1}\subset\tilde{\mathcal{U}}_{u}(k)^{\mathcal{X}_0^2}\:\forall k\geq 0$.
\end{theorem}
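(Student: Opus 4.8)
The plan is to prove the two statements separately, each time reducing to the subspace characterizations already in hand. For Statement 1, I would first dispatch the non-strict inclusion by taking $\mathcal{X}_s^2=\mathcal{X}_s^1$: since $\mathcal{X}_s^1$ is strongly opaque in $\mathcal{X}_0^1$, Lemma \ref{SISO_Condition} gives $\mathcal{X}_s^1\subset(\mathcal{X}_0^1\backslash\mathcal{X}_s^1)\bigoplus\mathcal{V}(\Gamma)$, and since $\mathcal{X}_0^1\subset\mathcal{X}_0^2$ forces $\mathcal{X}_0^1\backslash\mathcal{X}_s^1\subseteq\mathcal{X}_0^2\backslash\mathcal{X}_s^1$, the same membership certifies strong opacity of $\mathcal{X}_s^1$ in $\mathcal{X}_0^2$. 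For the strict ``iff'', in the ($\Leftarrow$) direction I would take the witnessed $x(0)\in\mathcal{X}_0^2\backslash\mathcal{X}_0^1$ together with a second point $x'\in(x(0)\bigoplus\mathcal{V}(\Gamma))\cap\mathcal{X}_0^2$ with $x'\neq x(0)$, and set $\mathcal{X}_s^2=\mathcal{X}_s^1\cup\{x(0)\}$. Because $x(0)-x'\in\mathcal{V}(\Gamma)$, Lemma \ref{Opaque_Condition} makes $x(0)$ opaque relative to $x'$; re-checking $\mathcal{X}_s^2\subset(\mathcal{X}_0^2\backslash\mathcal{X}_s^2)\bigoplus\mathcal{V}(\Gamma)$ through Lemma \ref{SISO_Condition} then yields strong opacity of the enlarged set. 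The one subtlety is guaranteeing that the partner used for $x(0)$ is genuinely non-secret: this I would settle using that $\mathcal{V}(\Gamma)$ is a subspace, so the coset $(x(0)\bigoplus\mathcal{V}(\Gamma))\cap\mathcal{X}_0^2$ always retains a representative outside $\mathcal{X}_s^2$ (if $x'$ is itself secret, additivity of $\mathcal{V}(\Gamma)$ lets me pass to $x'$'s own non-secret partner, which lies in the same coset as $x(0)$). For ($\Rightarrow$), a genuinely larger opaque set driven by the expansion contributes a new secret state $x(0)\in\mathcal{X}_0^2\backslash\mathcal{X}_0^1$; strong opacity via Lemma \ref{SISO_Condition} then forces a non-secret partner $w\in\mathcal{X}_0^2$ with $x(0)-w\in\mathcal{V}(\Gamma)$, so $w\in(x(0)\bigoplus\mathcal{V}(\Gamma))\cap\mathcal{X}_0^2$ and $w\neq x(0)$, which is exactly the stated condition.

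For Statement 2, monotonicity is immediate from Lemma \ref{set_undetectableattacks}: $\mathcal{X}_0^1\subset\mathcal{X}_0^2$ yields $\mathcal{X}_0^1\bigoplus-\mathcal{X}_0^1\subseteq\mathcal{X}_0^2\bigoplus-\mathcal{X}_0^2$, hence $O_k(\mathcal{X}_0^1\bigoplus-\mathcal{X}_0^1)\subseteq O_k(\mathcal{X}_0^2\bigoplus-\mathcal{X}_0^2)$, and the characterization of $\tilde{\mathcal{U}}_u(k)^{\mathcal{X}_0}$ gives $\tilde{\mathcal{U}}_u(k)^{\mathcal{X}_0^1}\subseteq\tilde{\mathcal{U}}_u(k)^{\mathcal{X}_0^2}$ for every $k\geq 0$. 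For the strict ``iff'', I would unpack $\tilde{\mathcal{U}}_u(k)^{\mathcal{X}_0^2}\backslash\tilde{\mathcal{U}}_u(k)^{\mathcal{X}_0^1}\neq\phi$ directly: any witness $\tilde U$ satisfies $F_k^{\tilde\Gamma}\tilde U=O_kz'$ for some $z'\in\mathcal{X}_0^2\bigoplus-\mathcal{X}_0^2$ while $\tilde U\notin\tilde{\mathcal{U}}_u(k)^{\mathcal{X}_0^1}$; setting $z(0)=-z'$ and using symmetry of $\mathcal{X}_0^2\bigoplus-\mathcal{X}_0^2$ gives $-O_kz(0)\in F_k^{\tilde\Gamma}(\mathbb{R}^{(k+1)q}\backslash\tilde{\mathcal{U}}_u(k)^{\mathcal{X}_0^1})$, and every step reverses, so only the sign and symmetry bookkeeping of the difference set needs care.

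For the special case, I would observe that a square full-rank $\tilde D$ makes $F_k^{\tilde\Gamma}$ block lower triangular with invertible diagonal blocks, hence invertible, so $\tilde{\mathcal{U}}_u(k)^{\mathcal{X}_0}=(F_k^{\tilde\Gamma})^{-1}O_k(\mathcal{X}_0\bigoplus-\mathcal{X}_0)$ and the strict condition from Statement 2 collapses to $O_k(\mathcal{X}_0^2\bigoplus-\mathcal{X}_0^2)\not\subseteq O_k(\mathcal{X}_0^1\bigoplus-\mathcal{X}_0^1)$. I expect the main obstacle to be precisely this last strict growth of the observable image of the difference set under $\mathcal{X}_0^1\subset\mathcal{X}_0^2$: it should follow by producing, from a point of $\mathcal{X}_0^2\backslash\mathcal{X}_0^1$, a difference vector whose image under $O_k$ escapes $O_k(\mathcal{X}_0^1\bigoplus-\mathcal{X}_0^1)$, for which injectivity of $O_k$ (clean for $k\geq n-1$ by observability of $\Gamma$) and the geometry of the sets $\mathcal{X}_0$ are the ingredients I would lean on most.
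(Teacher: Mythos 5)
Your route is essentially the paper's own: Statement 1 via the dichotomy (new states either join $\mathcal{X}_{ns}$ and change nothing, or join $\mathcal{X}_s$ and need a partner in their $\mathcal{V}(\Gamma)$-coset), with the same case split on whether the witness $x'$ is non-secret or an already-opaque secret whose partner is borrowed through the coset identity $x_s^1(0)\bigoplus\mathcal{V}(\Gamma)=x_s^2(0)\bigoplus\mathcal{V}(\Gamma)$; Statement 2 via Lemma \ref{set_undetectableattacks} and direct unpacking of a witness in $\tilde{\mathcal{U}}_u(k)^{\mathcal{X}_0^2}\backslash\tilde{\mathcal{U}}_u(k)^{\mathcal{X}_0^1}$. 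These parts are correct and match the appendix proof step for step.

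The one place you stop short --- establishing, in the square full-rank $\tilde{D}$ case, that $O_k(\mathcal{X}_0^2\bigoplus-\mathcal{X}_0^2)$ strictly contains $O_k(\mathcal{X}_0^1\bigoplus-\mathcal{X}_0^1)$ --- is exactly the step the paper does not argue either; it simply asserts the strict inclusion from observability. Your caution here is warranted on two counts. First, $\mathcal{X}_0^1\subset\mathcal{X}_0^2$ does \emph{not} by itself force $\mathcal{X}_0^1\bigoplus-\mathcal{X}_0^1\subset\mathcal{X}_0^2\bigoplus-\mathcal{X}_0^2$ strictly: difference sets of nested sets can coincide (e.g., $\{0,1,3\}\subset\{0,1,2,3\}$ in $\mathbb{R}$ have the same difference set $\{0,\pm1,\pm2,\pm3\}$), so the ingredient you say you would ``lean on'' is genuinely missing and cannot be supplied in full generality. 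Second, injectivity of $O_k$ is only guaranteed for $k\geq n-1$, not for all $k\geq 0$ as the claim requires. So do not regard your inability to close this step as a defect of your argument relative to the paper's: it marks a gap in the theorem's final sentence that additional hypotheses on $\mathcal{X}_0^1,\mathcal{X}_0^2$ (or on $k$) would be needed to repair. Everything else in your proposal is sound.
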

\begin{proof}
Refer Appendix.    
\end{proof}
Theorem \ref{initialset_tradeoff1} shows that on expanding $\mathcal{X}_0$, the opaque and undetectable attack sets can expand or remain unchanged. This implies a non-strict trade-off between opacity and attack detectability. Further, the theorem provides conditions under which these sets expand, and a strict trade-off exists in this case.
Note that expanding $\mathcal{X}_0$ always expands the set of undetectable attacks for a certain set of $\tilde{\Gamma}$ but the set of opaque secret states expands only under specific conditions. Therefore, the additional initial states to expand $\mathcal{X}_0$ should be chosen carefully to satisfy the theorem conditions, so that expansion of $\mathcal{X}_0$ also results in the expansion of $\mathcal{X}_s$.

\section{Illustrative Example}
We consider an automotive vehicle application to illustrate our results. The identity of a certain set of vehicles is to be kept secret (for example, ambulances, money trucks etc.). Their secrecy is achieved by having other vehicles move along with them throughout the journey. In this case, since the location information is same for both sets of vehicles, an eavesdropper cannot extract the identity of the secret vehicles by viewing just the location information. We also consider an attacker that is capable of injecting malicious control inputs to the vehicles remotely. The scenario is illustrated in Fig. \ref{fig:Illustrative_Example}. A similar example to this is considered in \cite{OpacityLinearSystems_IEEETAC_2020} under a different framework for opacity.\par
Let the linear system model $\Gamma$ be:\par
{\small \begin{align*}
    \begin{bmatrix}
p_x(k+1) \\
v_x(k+1)
\end{bmatrix}&=\! \begin{bmatrix}
1 & 1\\
0 & 1\\
\end{bmatrix}\begin{bmatrix}
p_x(k)\\
v_x(k)\\
\end{bmatrix}\!+\!\begin{bmatrix}
0.5\\
1\\
\end{bmatrix}
a_x(k),\\
y(k)&=\begin{bmatrix}
1 & 0
\end{bmatrix}\begin{bmatrix}
p_x(k)\\
v_x(k)\\
\end{bmatrix},
\end{align*}}

 \noindent where $p_x,v_x$ and $a_x$ represent the horizontal position, horizontal velocity and horizontal acceleration, respectively, of the vehicles.
 The non-secret vehicles are always mobile (along different circuits) while the secret vehicles start from rest to join the non-secret vehicles. Hence, secret and non-secret vehicles are differentiated based on their initial velocity.\par 
 Let $\mathcal{X}_0=\mathbb{R}^2$. Here, $\Gamma$ is observable and $\mathcal{V}(\Gamma)=$ span$\{[0,1]^T\}$. Therefore, for any non-secret vehicle with non-zero velocity at a particular position allowed by $\mathcal{X}_0$, any secret vehicle starting at rest from the same position will be opaque with respect to the non-secret vehicle. For instance,  $x_s(0)=[1,0]^T$, with an acceleration of $U_s(3)=[2,2,2,2]^T$ and $x_{ns}(0)=[1,1]^T$, with an acceleration of $U_{ns}(3)=[0,4,0,0]^T$ produces the same trajectory of $[p_x(0),p_x(1),p_x(2),p_x(3)]^T=[1,2,5,10]^T$.\par
 Since opaque sets exist in the system, undetectable attacks also co-exist. Consider $\tilde{\Gamma}=\Gamma$. An undetectable attack on the non-secret vehicle with state $x_s(0)=[1,0]^T$ is $\tilde{U}_u(3)=[2,-2,2,0]^T$, which is injected in addition to the normal input acceleration of $U_s(3)=[2,2,2,2]^T$. Due to this attack, the secret vehicle speeds up to the trajectory $[1,3,7,13]^T$ without getting detected.\par
 \begin{figure}[htp]
    \centering
    \includegraphics[width=8.5cm]{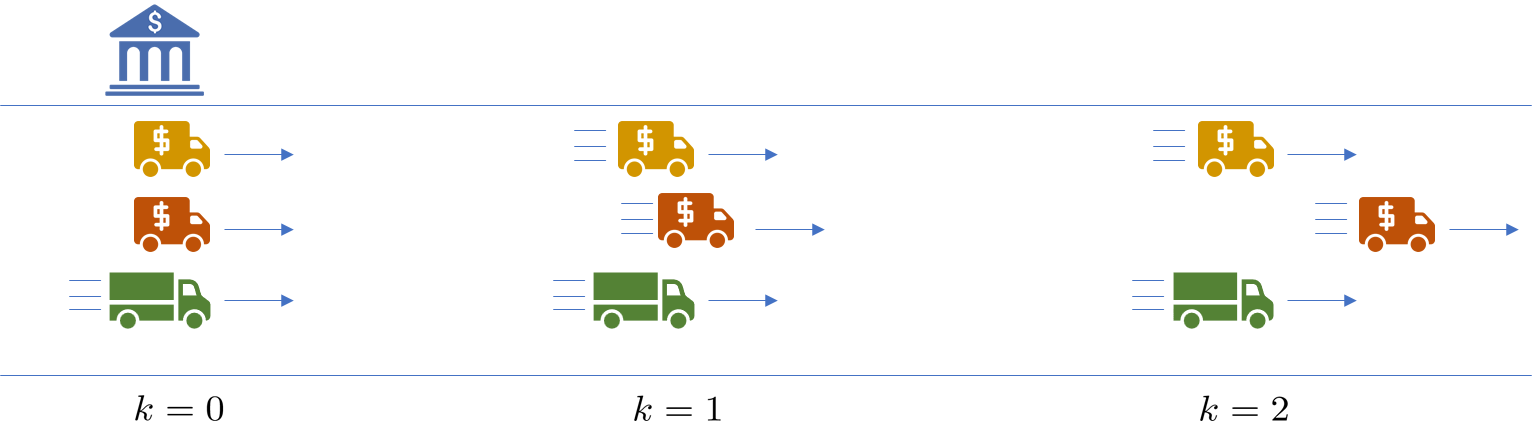}
    \caption{Trucks carrying money (denoted by yellow and red color) are secret vehicles. The non-secret vehicle (green color) moves along to keep both money trucks confidential. However, a remote undetectable attack on the red money truck causes it to speed up.}
    \label{fig:Illustrative_Example}
\end{figure}
\section{Conclusion}
We analyzed the underlying connection between the notion of opacity and attack detectability for linear systems. The fundamental relation between opacity and the weakly unobservable subspace was studied from multiple perspectives. Using this characterization, we showed that a fundamental trade-off exists between opaque sets and undetectable attacks. In particular, we showed that opaque sets co-exist with undetectable attack sets, and under certain conditions, expanding one results in the expansion of the other. %This result equips operators to modify system parameters such that the effect of the trade-off is minimized.\par
%This trade-off may be applied to a broad range of systems, especially in networked control systems that are gaining prevalence in recent times. One such example of a remotely controlled automotive system was considered.\par
Future directions include investigating the effect of changing systems matrices $A$ and $C$ on opacity and attack detectability, exploring numerical algorithms to efficiently verify the opacity conditions, and determining if attack inputs have any effect of opacity of certain states.

%In this paper, analysis of the trade-off by changing system matrices was not considered. Also, the trade-off was assessed independently for opaque sets and undetectable attacks when system parameters are modified. However, if these are considered together, attacks could also have an effect on opaque sets directly. As part of future work, these two problems could also be taken up to find further connections.
\section{Appendix}
\subsection{Proof of Theorem 1}
If: Since $\mathcal{V}(\Gamma_1)\subset\mathcal{V}(\Gamma_2)$, $\mathcal{V}(\Gamma_2)$ is a space of dimension higher than $\mathcal{V}(\Gamma_1)$. Consider any opaque set $\mathcal{X}_s^1$ in $\Gamma_1$. Since $\mathcal{X}_s^1$ is (strongly) opaque, it also holds that $\mathcal{X}_s^1\xrightarrow{\text{s.o.}}\mathcal{X}_{0}\backslash\mathcal{X}_{s}^1$. From Corollary \ref{Corr_Opaque_State}, for any $x_s(0)\in\mathcal{X}_s^1$, corresponding $x_{ns}(0)$ that makes it opaque belongs to the set $x_s(0)\bigoplus\mathcal{V}(\Gamma_1)$. Since $\mathcal{V}(\Gamma_1)\subset\mathcal{V}(\Gamma_2)$, it also holds that $x_{ns}(0)\in x_s(0)\bigoplus\mathcal{V}(\Gamma_2)$. Therefore, all $x_s(0)$ in $\Gamma_1$ will remain opaque in $\Gamma_2$ due to corresponding $x_{ns}(0)\in x_s(0)\bigoplus\mathcal{V}(\Gamma_1)$. \par
Now, for $\Gamma_2$, in order to construct $\mathcal{X}_s^2$, we need to have an extra opaque $x_s(0)\notin\mathcal{X}_s^1$. For this, we choose any $x_{ns}(0)\in\mathcal{X}_{0}\backslash\mathcal{X}_{s}^1$ and set it as a secret state $x_s(0)$. For this $x_s(0)$, the corresponding non-secret state, denoted by $x_{ns}^o(0)$, that will make it opaque in $\Gamma_2$ must be from the set $x_s(0)\bigoplus\mathcal{V}(\Gamma_2)$. Since $\mathcal{V}(\Gamma_1)\subset\mathcal{V}(\Gamma_2)$, we have that $x_s(0)\bigoplus\mathcal{V}(\Gamma_2)=\bigcup\limits_{x(0)}x(0)\bigoplus\mathcal{V}(\Gamma_1)$ for some initial states $x(0)\in\mathcal{X}_0$. Now each space $x(0)\bigoplus\mathcal{V}(\Gamma_1)$, except for $x_s(0)\bigoplus\mathcal{V}(\Gamma_1)$, will have at least one $x_{ns}(0)$. This is because if such an $x_{ns}(0)$ does not exist, all elements in the corresponding $x(0)\bigoplus\mathcal{V}(\Gamma_1)$ will be secret states that are not opaque in $\Gamma_1$. However, in $\mathcal{X}_s^1$, all secret states are opaque. Consequently, there exists some $x_{ns}(0)\in x_s(0)\bigoplus\mathcal{V}(\Gamma_2)$. Choosing this $x_{ns}(0)$ as $x_{ns}^o(0)$ will make $x_s(0)$ opaque in $\Gamma_2$. Also, since $\mathcal{V}(\Gamma_1)\subset\mathcal{V}(\Gamma_2)$, all other secret states in $x_s(0)\bigoplus\mathcal{V}(\Gamma_1)$ remain opaque in $\Gamma_2$ due to $x_{ns}^o(0)$.\par
Finally, since this $x_s(0)$ is an addition to $\mathcal{X}_s^1$ and is opaque, $\mathcal{X}_s^2$ is more (strongly) opaque than $\mathcal{X}_s^1$.\par
Only if: We prove by contrapositive argument. Therefore, for $\mathcal{V}(\Gamma_1)\supseteq\mathcal{V}(\Gamma_2)$, it is to be shown that there exists an opaque set $\mathcal{X}_s^1$ in $\Gamma_1$ such that there does not exist an opaque set $\mathcal{X}_s^2$ in $\Gamma_2$ which is more opaque than $\mathcal{X}_s^1$.\par
Let us consider such an opaque set $\mathcal{X}_s^1$ to be $\mathbb{R}^n\backslash\mathcal{V}(\Gamma_1)^\perp$. Therefore, $\mathcal{X}_{ns}^1=\mathcal{V}(\Gamma_1)^{\perp}$. First, let $\mathcal{V}(\Gamma_2)=\mathcal{V}(\Gamma_1)$. Now, in order to construct $\mathcal{X}_s^2$ that is more opaque than $\mathcal{X}_s^1$, we need to add opaque secret state not in $\mathcal{X}_s^1$. For this, at least one $x_{ns}(0)\in\mathcal{X}_{ns}^1$ has to be changed to $x_s(0)$. However, since $\mathcal{X}_{ns}^1=\mathcal{V}(\Gamma_1)^{\perp}$ and due to Corollary \ref{Corr_Opaque_State}, we have that for each $x_{ns}(0)\in\mathcal{X}_{ns}^1$, all elements  in $x_{ns}(0)\bigoplus\mathcal{V}(\Gamma_1)$ except $x_{ns}(0)$ are opaque secret states and are opaque solely due to $x_{ns}(0)$. Therefore, changing any $x_{ns}(0)$ to $x_s(0)$ would make all $x_s(0)$ in $x_{ns}(0)\bigoplus\mathcal{V}(\Gamma_1)$ transparent. If we consider $\Gamma_2$ such that $\mathcal{V}(\Gamma_2)\subset\mathcal{V}(\Gamma_1)$, we observe the same result. Therefore, for this set, additional opaque $x_s(0)$ cannot be added and thus such an $\mathcal{X}_s^2$ cannot exist.
\subsection{Proof of Theorem 2}
Only if: Since there exists (strongly) opaque set $\mathcal{X}_s$ in $\Gamma$, due to Lemma \ref{WO_iff_VGamma}, we have that $\mathcal{V}(\Gamma)\neq 0$ . If we take $\tilde{\Gamma}=\Gamma$, we get $\mathcal{V}(\tilde{\Gamma})\neq 0$ and therefore, there exists an attack sequence $\tilde{U}_u(k)$ that is undetectable for any $k\geq 0$. Now we show this attack is non-zero. Since $\Gamma$ is observable, $\tilde{\Gamma}$ is also observable. Therefore, from \eqref{aop-seq}, we have for $0\neq x(0)\in\mathcal{V}(\tilde{\Gamma})$, $F_k^{\tilde{\Gamma}} \tilde{U}_u(k)=-O_k x(0)\neq 0$. Hence, $\tilde{U}_u(k)\neq 0$ for any $k\geq 0$. Finally, since $\tilde{\Gamma}=\Gamma$, we have $\mathcal{R}([\tilde{B}^{T},\tilde{D}^{T}]^T)= \mathcal{R}([B^T,D^T]^T)$.\par
If: The proof is inspired by proof of Theorem 3 of \cite{DynamicDetection_IEEETAC_2017}. For any $k\geq 0$, given such an undetectable attack $\tilde{U}_u(k)$, there exists a corresponding $x(0)$ such that $O_k x(0)+F_k^{\tilde{\Gamma}}\tilde{U}_u(k)=0$ for any $k\geq 0$. Let $x(0)\neq 0$. Since $\tilde{U}_u(k)$ produces zero output sequence for all $k\geq 0$, $0\neq x(0)\in\mathcal{V}(\tilde{\Gamma})$. Therefore,  $\mathcal{V}(\tilde{\Gamma})\neq 0$.\par 
For the case when $x(0)=0$, since $\tilde{U}_u(k)\neq 0$, there exists minimum $k_{\text{min}}\geq 0$ such that  $\tilde{u}_u(k)=0\:\forall k<k_{\text{min}}$ and $\tilde{u}_u(k_{\text{min}})\neq 0$. Therefore,  $x(k)=0\:\forall 0\leq k\leq k_{\text{min}}$. Also, $x(k_{\text{min}}+1)=\tilde{B}\tilde{u}_u(k_{\text{min}})$. Since $\tilde{U}_u(k)$ is an undetectable attack, $y(k)=0\:\forall k\geq 0$. Consequently, $ y(k_{\text{min}})=0$ and thus, $\tilde{D}\tilde{u}_u(k_{\text{min}})=0$. Since $[\tilde{B}^T,\tilde{D}^T]^T$ is full column rank and since $\tilde{u}_u(k_{\text{min}})\neq 0$,  $\tilde{B}\tilde{u}_u(k_{\text{min}})\neq 0$. Therefore, $x(k_{\text{min}}+1)\neq 0$. Since $\tilde{U}_u(k)$ produces zero output for all $k\geq 0$, $\tilde{U}_u(k)$ produces zero output for all $k\geq k_{\text{min}}+1$. This implies that $x(k_{\text{min}}+1)\in\mathcal{V}(\tilde{\Gamma})$. Since $x(k_{\text{min}}+1)\neq 0$, we have $\mathcal{V}(\tilde{\Gamma})\neq 0$.\par
For both cases of $x(0)$ analyzed above, we have $\mathcal{V}(\tilde{\Gamma})\neq 0$. Now we show that if, in addition, $\tilde{\Gamma}$ is related to $\Gamma$ such that $\mathcal{R}([\tilde{B}^{T},\tilde{D}^{T}]^T)\subseteq \mathcal{R}([B^T,D^T]^T)$, then $\mathcal{V}(\Gamma)\neq 0$. Using recursive algorithm stated in the proof of Lemma \ref{Gamma_set}, we see that if $\tilde{\Gamma}$ is such that $\mathcal{R}([\tilde{B}^{T},\tilde{D}^{T}]^T)\subseteq \mathcal{R}([B^T,D^T]^T)$, then $\mathcal{V}(\tilde{\Gamma})\subseteq\mathcal{V}(\Gamma)$. Since $\mathcal{V}(\tilde{\Gamma})\neq 0$, we have $\mathcal{V}(\Gamma)\neq 0$. From this and Lemma \ref{WO_iff_VGamma}, there exist a (strongly) opaque set $\mathcal{X}_s$ for $\Gamma$.
\subsection{Proof of Theorem 3}
Statement 1: In order to construct $\mathcal{X}_s^2$, for which all its elements are opaque, $x(0)\in\mathcal{X}_0^2\backslash\mathcal{X}_0^1$ may be chosen as $x_{ns}(0)$ or as $x_s(0)$. Since all elements in $\mathcal{X}_s^1$ are opaque, $x(0)$ chosen as $x_{ns}(0)$ will not add any opaque $x_s(0)$ to $\mathcal{X}_s^1$.  Therefore, $\mathcal{X}_s^2=\mathcal{X}_s^1$ and $\mathcal{X}_s^2\xrightarrow{\text{s.o.}}\mathcal{X}_{0}^2\backslash\mathcal{X}_s^2$. However, if $x(0)$ is chosen as opaque $x_s(0)$, $\mathcal{X}_s^2=\mathcal{X}_s^1\cup x_s(0)$ is such that $\mathcal{X}_s^2\xrightarrow{\text{s.o.}}\mathcal{X}_{0}^2\backslash\mathcal{X}_s^2,$ and $\mathcal{X}_s^1\subset\mathcal{X}_s^2$. Hence, $\mathcal{X}_s^1\subseteq\mathcal{X}_s^2$. Next, we proceed to prove condition for $\mathcal{X}_s^1\subset\mathcal{X}_s^2$.\par
    Only if: We prove by contrapositive argument. Let $(x(0)\bigoplus\mathcal{V}(\Gamma))\cap\mathcal{X}_0^2=\{x(0)\}$. Since we consider $\mathcal{X}_s^1$ to be strongly opaque, if $x(0)$ is taken as $x_{ns}(0)$, it will not add additional opaque secret states. Therefore, let us consider $x(0)$ to be secret state $x_s(0)$. From Corollary \ref{Corr_Opaque_State}, we see that the corresponding $x_{ns}(0)$ such that $x_s(0)\xrightarrow{\text{o}}\{x_{ns}(0)\}$ is equal to $(x(0)\bigoplus\mathcal{V}(\Gamma))\cap\mathcal{X}_0^2$ which is equal to $x(0)$. However, this is not possible since $x_s(0)$ and $x_{ns}(0)$ must be different. Thus, if $x(0)$ is taken as secret state, it will not be opaque. Therefore, $\mathcal{X}_0^2$ does not add any opaque secret state $x_s(0)$ beyond what was present in $\mathcal{X}_0^1$.\par
If: Since there exists $x(0)\in\mathcal{X}_0^2\backslash\mathcal{X}_0^1$ such that $(x(0)\bigoplus\mathcal{V}(\Gamma))\cap\mathcal{X}_0^2\neq \{x(0)\}$, there exists $x^\prime (0)\neq x(0)$ for which $x^\prime(0)\in (x(0)\bigoplus\mathcal{V}(\Gamma))\cap\mathcal{X}_0^2$. This $x^\prime(0)$ can belong to either $\mathcal{X}_0^2\backslash\mathcal{X}_0^1$ or to $\mathcal{X}_0^1$. We consider the following cases:\par
1. $x^\prime(0)\in\mathcal{X}_0^2\backslash\mathcal{X}_0^1$. If $x(0)$ is chosen as $x_s(0)$ and $x^\prime(0)$ is chosen as $x_{ns}(0)$, then since $x^\prime(0)\in (x(0)\bigoplus\mathcal{V}(\Gamma))\cap\mathcal{X}_0^2$, due to Corollary \ref{Corr_Opaque_State}, $x(0)$ is an opaque secret state.\par
2. $x^\prime(0)\in\mathcal{X}_0^1$. In this case, since $\mathcal{X}_s^1\xrightarrow{\text{s.o.}}\mathcal{X}_{0}^1\backslash\mathcal{X}_{s}^1$, the $x^\prime(0)$ is $x_{ns}(0)$ or opaque $x_s(0)$ in the original set $\mathcal{X}_0^1$. These subcases are analyzed below:\par
a. $x^\prime(0)$ is $x_{ns}(0)\in\mathcal{X}_0^1$. In this case, if $x(0)$ is chosen as $x_s(0)$, then as before, since $x^\prime(0)\in (x(0)\bigoplus\mathcal{V}(\Gamma))\cap\mathcal{X}_0^2$, due to Corollary \ref{Corr_Opaque_State}, $x(0)$ is an opaque secret state.\par
b. $x^\prime(0)$ is opaque $x_s(0)\in\mathcal{X}_s^1$. Let $x_s^1(0)$ denote this $x_s(0)$. In this case, there exists a corresponding $x_{ns}^1(0)\in\mathcal{X}_0^1$ such that $x_s^1(0)\xrightarrow{\text{o}}\{x_{ns}^1(0)\}$. Using Corollary \ref{Corr_Opaque_State}, $x_{ns}^1(0)\in (x_s^1(0)\bigoplus\mathcal{V}(\Gamma))\cap\mathcal{X}_0^1$. If $x(0)$ is chosen as secret state, denoted by $x_s^2(0)$, then the corresponding $x_{ns}^2(0)$ such that $x_s^2(0)\xrightarrow{\text{o}}\{x_{ns}^2(0)\}$ should belong to the set $(x_s^2(0)\bigoplus\mathcal{V}(\Gamma))\cap\mathcal{X}_0^2$.\par
Our aim is to find $x_{ns}^2(0)$. We note that $x_{ns}^1(0)\in (x_s^1(0)\bigoplus\mathcal{V}(\Gamma))\cap\mathcal{X}_0^1\subseteq (x_s^1(0)\bigoplus\mathcal{V}(\Gamma))\cap\mathcal{X}_0^2$. If it can be shown that  $x_s^1(0)\bigoplus\mathcal{V}(\Gamma)=x_s^2(0)\bigoplus\mathcal{V}(\Gamma)$, we may choose $x_{ns}^1(0)$ as the required $x_{ns}^2(0)$ that will make $x_s^2(0)$ opaque. This can be shown as follows. Since from theorem, $x^\prime(0)\in (x(0)\bigoplus\mathcal{V}(\Gamma))\cap\mathcal{X}_0^2$, we have $x^\prime(0)\in (x(0)\bigoplus\mathcal{V}(\Gamma))$. Using notation introduced above, this is same as $x_s^1(0)\in (x_s^2(0)\bigoplus\mathcal{V}(\Gamma))$. Now we find $(x_s^2(0)\bigoplus\mathcal{V}(\Gamma))$. Let $V$ be the matrix with column vectors being the basis vectors of $\mathcal{V}(\Gamma)$. Then for some $y_0$, $x_s^1(0)=x_s^2(0)+Vy_0$. Therefore, for $\mathcal{V}(\Gamma)=Vx$ where $x\in\mathbb{R}^n$,  $x_s^1(0)\bigoplus\mathcal{V}(\Gamma)=x_s^2(0)+Vy_0+Vx=x_s^2(0)+V(x+y_0)=x_s^2(0)\bigoplus\mathcal{V}(\Gamma)$ (since $Vy_0\in\mathcal{V}(\Gamma)$). Thus, $x_s^1(0)\bigoplus\mathcal{V}(\Gamma)=x_s^2(0)\bigoplus\mathcal{V}(\Gamma)$.\par
Since $x_{ns}^1(0)\in (x_s^1(0)\bigoplus\mathcal{V}(\Gamma))\cap\mathcal{X}_0^2$, and since $x_s^1(0)\bigoplus\mathcal{V}(\Gamma)=x_s^2(0)\bigoplus\mathcal{V}(\Gamma)$, it holds that $x_{ns}^1(0)\in (x_s^2(0)\bigoplus\mathcal{V}(\Gamma))\cap\mathcal{X}_0^2$, which is the set to which $x_{ns}^2(0)$ should belong to so that $x_s^2(0)=x(0)$ is opaque. Since $x_{ns}^1(0)$ exists in $\mathcal{X}_0^1$, we see that $x(0)$ is an opaque secret state.\par
In all the former cases, it is seen that there exists $x(0)\in\mathcal{X}_0^2\backslash\mathcal{X}_0^1$ that may be chosen as an opaque secret state. This implies that there is an addition of an opaque $x_s(0)$ due to expansion of $\mathcal{X}_0^1$ to $\mathcal{X}_0^2$. Therefore $\mathcal{X}_s^2$ is more opaque than $\mathcal{X}_s^1$.\par
    Statement 2: By changing $\mathcal{X}_0$ such that $\mathcal{X}_0^1\subset\mathcal{X}_0^2$ in the relation in Lemma \ref{set_undetectableattacks}, we note that $\tilde{\mathcal{U}}_{u}(k)^{\mathcal{X}_0^1}\subseteq\tilde{\mathcal{U}}_{u}(k)^{\mathcal{X}_0^2}\:\forall k\geq 0$. Next we show condition for $\tilde{\mathcal{U}}_{u}(k)^{\mathcal{X}_0^1}\subset\tilde{\mathcal{U}}_{u}(k)^{\mathcal{X}_0^2}$. $\tilde{\mathcal{U}}_{u}(k)^{\mathcal{X}_0^1}\subset\tilde{\mathcal{U}}_{u}(k)^{\mathcal{X}_0^2}$ if and only if there exists $\tilde{U}_u(k)^{\mathcal{X}_0^2}\in\tilde{\mathcal{U}}_{u}(k)^{\mathcal{X}_0^2}$ such that $\tilde{U}_u(k)^{\mathcal{X}_0^2}\in\mathbb{R}^{(k+1)q}\backslash\tilde{\mathcal{U}}_u(k)^{\mathcal{X}_0^1}$. Also, from basic definition, for any $k\geq 0$, $\tilde{U}_u(k)^{\mathcal{X}_0^2}$ is undetectable if and only if $-O_k(x(0)-x^\prime (0))=F_k^{\tilde{\Gamma}}\tilde{U}_u(k)^{\mathcal{X}_0^2}$ for some $x(0),x^\prime(0)\in\mathcal{X}_0^2$. Let $z(0)=x(0)-x^\prime (0)$. We use these basic results in the following proof.\par
    Only if: From the above, it follows that $z(0)\in\mathcal{X}_0^2\bigoplus-\mathcal{X}_0^2$ and for any $k\geq 0$, $-O_kz(0)\in F_k^{\tilde{\Gamma}}(\mathbb{R}^{(k+1)q}\backslash\tilde{\mathcal{U}}_u(k)^{\mathcal{X}_0^1})$\par
    If: For some $x(0),x^\prime(0)\in\mathcal{X}_0^2$, there exists $z(0)=x(0)-x^\prime(0)$ such that for any $k\geq 0$, we have for some $\tilde{U}(k)^{\mathcal{X}_0^2}\in\mathbb{R}^{(k+1)q}\backslash\tilde{\mathcal{U}}_u(k)^{\mathcal{X}_0^1}$ it holds that $-O_kz(0)= F_k^{\tilde{\Gamma}}\tilde{U}(k)^{\mathcal{X}_0^2}$. From the above, $\tilde{U}(k)^{\mathcal{X}_0^2}$ is undetectable and $\tilde{\mathcal{U}}_{u}(k)^{\mathcal{X}_0^1}\subset\tilde{\mathcal{U}}_{u}(k)^{\mathcal{X}_0^2}$.
    
    Finally, if $\tilde{\Gamma}$ is chosen such that $\tilde{D}$ is square full rank, then $F_k^{\tilde{\Gamma}}$ is also square and full rank for all $k\geq 0$ (minimum rank of lower triangular matrix is sum of rank of diagonal blocks). Therefore, $\tilde{\mathcal{U}}_u(k)={(F_k^{\tilde{\Gamma}})}^{-1} O_k (\mathcal{X}_0\bigoplus-\mathcal{X}_0)\:\forall k\geq 0$. Consequently, if $\mathcal{X}_0^1\subset\mathcal{X}_0^2$, since $\tilde{\Gamma}$ is observable, it holds that $\tilde{\mathcal{U}}_{u}(k)^{\mathcal{X}_0^1}\subset\tilde{\mathcal{U}}_{u}(k)^{\mathcal{X}_0^2}\:\forall k\geq 0$.
\addtolength{\textheight}{-12cm}   % This command serves to balance the column lengths
                                  % on the last page of the document manually. It shortens
                                  % the textheight of the last page by a suitable amount.
                                  % This command does not take effect until the next page
                                  % so it should come on the page before the last. Make
                                  % sure that you do not shorten the textheight too much.

\end{document}